\colorlet{dblue}{blue!40!black}
\newcommand{\msf}{\mathsf}
\renewcommand{\mit}{\mathit}
\newcommand{\ap}{{\scriptsize@}}
\theoremstyle{definition}
\newtheorem{notation}[theorem]{Notation}
\newtheorem*{mremark}{Remark}
\newcommand{\pairlft}{{\langle}}
\newcommand{\pairrgt}{{\rangle}}
\newcommand{\pairsep}{{,\,}}
\newcommand{\pairstr}[1]{\pairlft#1\pairrgt}
\newcommand{\pair}[2]{\pairstr{#1\pairsep#2}}
\newcommand{\sBNFis}{{{:}{:}{=}}}
\newcommand{\BNFor}{\mathrel{|}}
\newcommand{\coBNFis}{\mathrel{\sBNFis^{\text{co}}}}
\newcommand{\atrs}{\mathcal{R}}
\newcommand{\sred}{{\rightarrow}}
\newcommand{\rerat}[2]{\mathrel{\sred_{#1,#2}}}
\newcommand{\mred}{\to\hspace{-2.8mm}\to}
\newcommand{\ieq}{\stackrel{\infty}{=}}
\newcommand{\harpoondown}{\mathrel{\ooalign{$\leftharpoondown$\cr$\rightharpoondown$\cr}}}
\newcommand{\ieqdown}{\stackrel{\infty}{{\harpoondown}}}%
\newcommand{\ired}{\to^\infty}%
\newcommand{\iredi}{\mathrel{\reflectbox{$\ired$}}}
\newcommand{\ireddown}{\rightharpoondown^\infty}%
\newcommand{\ireddownfin}{\stackrel{\makebox(0,0){\raisebox{3pt}{{\tiny$<$}}}}{\rightharpoondown}^\infty}%
\newcommand{\ireddownmfin}{\stackrel{\makebox(0,0){\raisebox{3pt}{{\tiny($<$)}}}}{\rightharpoondown}^\infty}%
\newcommand{\rstep}{\to_{\varepsilon}}
\newcommand{\ibi}{\stackrel{\infty}{\to}}%
\newcommand{\ibiinv}{\stackrel{\infty}{\leftarrow}}%
\newcommand{\ibidown}{\stackrel{\infty}{\rightharpoondown}}%
\newcommand{\redord}{\to_{\textit{ord}}}
\newcommand{\redorddown}{\rightharpoondown_{\textit{ord}}}
\newcommand{\iredord}{\ired_{\textit{ord}}}
\newcommand{\iredorddown}{\ireddown_{\textit{ord}}}
\newcommand{\fun}[1]{\mathsf{#1}}
\newcommand{\relcomp}{\circ}
\newcommand{\nat}{\mathbb{N}}
\newcommand{\sarity}{\mit{ar}}
\newcommand{\arity}[1]{\sarity(#1)}
\newcommand{\avars}{\mathcal{X}}
\newcommand{\vars}[1]{\mathcal{V}\hspace{-.1ex}\mit{ar}(#1)}
\newcommand{\ster}{\mathit{Ter}}
\newcommand{\ter}[2]{\ster(#1,#2)}
\newcommand{\siter}{\ster^{\infty}}
\newcommand{\iter}[2]{\siter(#1,#2)}
\newcommand{\posemp}{\varepsilon}
\newcommand{\apos}{p}
\newcommand{\pos}[1]{\mathcal{P}\!os(#1)}
\newcommand{\subtrm}[2]{#1|_{#2}}
\newcommand{\asubst}{\sigma}
\newcommand{\slfp}{\mu}
\newcommand{\lfp}[2]{\slfp {#1}.\,#2}
\newcommand{\sgfp}{\nu}
\newcommand{\gfp}[2]{\sgfp {#1}.\,#2}
\newcommand{\down}[1]{\overline{#1}}
\newcommand{\id}{\mathrm{Id}}%
\newcommand{\tlat}{L}
\newcommand{\Pow}{\mathcal{P}}
\newcommand{\rsplit}{\ensuremath{\msf{split}}}
\newcommand{\rlift}{\ensuremath{\msf{lift}}}
\newcommand{\rid}{\ensuremath{\msf{id}}}%
\newcommand{\nest}{\textit{der}}
\newcommand{\fp}{\textit{fp}}
\newcommand{\boxx}[1]{\colorbox[rgb]{0.99,0.78,0.07}{\kern0.15em#1\kern0.15em}\quad}
\newcommand{\hole}{\raisebox{-2pt}{\scalebox{.7}[1.5]{$\Box$}}}
\newcommand{\sdefd}{:=}
\newcommand{\defd}{\mathrel{\sdefd}}
\newcommand{\aes}{\atrs}%
\renewcommand{\emptyset}{\varnothing}
\newcommand{\fap}[2]{#1(#2)}
\newcommand{\bfap}[3]{\fap{#1}{#2,#3}}
\newcommand{\smetric}{\mathrm{d}}
\newcommand{\metric}{\bfap{\smetric}}
\newcommand{\smktree}{\mathfrak{T}}
\newcommand{\mktree}{\fap{\smktree}}
\newcommand{\smktreedown}{\smktree'}
\newcommand{\mktreedown}{\fap{\smktreedown}}
\newcommand{\smktreedownfin}{\smktree'_{<}}
\newcommand{\mktreedownfin}{\fap{\smktreedownfin}}
\newcommand{\smktreedownmfin}{\smktree'_{(<)}}
\newcommand{\mktreedownmfin}{\fap{\smktreedownmfin}}
\newcommand{\sto}{\rightsquigarrow}
\newcommand{\tail}{\msf{f}}
\title{A Coinductive Framework for Infinitary Rewriting and Equational Reasoning (Extended Version\footnote{%
  This current paper is an extended version of~\cite{endr:hans:hend:polo:silv:2013}.
  We have included a detailed comparison of 
  our notion of infinitary equational reasoning $\ieq$
  with the notion $S_{E}(\to)$ from~\cite{kahr:2013}, see Section~\ref{sec:venn}.
})}
\author[1]{J\"{o}rg Endrullis}
\author[2]{Helle Hvid Hansen}
\author[1]{Dimitri Hendriks} 
\author[3]{Andrew Polonsky}
\author[4]{Alexandra Silva}
\authorrunning{Endrullis, Hansen, Hendriks, Polonsky, and Silva}
\affil[1]{
  Department of Computer Science, VU University Amsterdam, The Netherlands,
  \texttt{\{j.endrullis | r.d.a.hendriks\}@vu.nl}
}
\affil[2]{
  Department of Engineering Systems and Services,
  Delft University of Technology, The Netherlands,
  \texttt{h.h.hansen@tudelft.nl}
}
\affil[3]{
  Institut Galil\'{e}e, Universit\'{e} Paris 13, France,
  \texttt{andrew.polonsky@gmail.com}
}
\affil[4]{
  Department of Computer Science, Radboud University Nijmegen, The Netherlands,
  \texttt{alexandra@cs.ru.nl}
}
\subjclass{D.1.1, D.3.1, F.4.1, F.4.2, I.1.1, I.1.3}
\keywords{infinitary rewriting, coinduction}
\begin{document}

\maketitle

\begin{abstract}
  We present a coinductive framework for defining infinitary analogues of equational reasoning 
  and rewriting in a uniform way. We define the relation $\ieq$, a notion of 
  infinitary equational reasoning, and $\ired$, the standard notion of infinitary rewriting as follows:
  \begin{align*}
    {\ieq} &\;\;\defd\;\; \gfp{R}{(=_{\aes} \cup \mathrel{\down{R}})^*} \\ %
    {\ired} &\;\;\defd\;\; \lfp{R}{\gfp{S}{(\to_{\atrs} \cup \mathrel{\down{R}})^*\relcomp \down{S}}} %
  \end{align*}
  where $\slfp$ and $\sgfp$ are the least and greatest fixed-point operators, respectively,
  and where 
  \begin{align*}
    \down{R} \;\defd\; \{\,\pair{f(s_1,\ldots,s_n)}{\,f(t_1,\ldots,t_n)} \mid f \in \Sigma,\, s_1 \mathrel{R} t_1,\ldots,s_n \mathrel{R} t_n\,\} \,\cup\, \id  \,.
  \end{align*}
  
  The setup captures rewrite sequences of arbitrary ordinal length, 
  but it has neither the need for ordinals nor for metric convergence. 
  This makes the framework especially suitable for formalizations in theorem provers.
\end{abstract}

\section{Introduction}\label{sec:intro}

We present a coinductive framework for defining infinitary equational reasoning and infinitary rewriting
in a uniform way. The framework is free of ordinals, metric convergence and partial orders which have been
essential in earlier definitions of the 
concept of infinitary rewriting~\cite{ders:kapl:plai:1991,kenn:klop:slee:vrie:1995a,klop:vrij:2005,kenn:vrie:2003,kahr:2013,bahr:2010,bahr:2010b,bahr:2012,endr:hend:klop:2012}.

Infinitary rewriting is a generalization of the ordinary finitary rewriting
to infinite terms and infinite reductions (including reductions of ordinal length greater than $\omega$).
For the definition of rewrite sequences of ordinal length,
there is a design choice concerning the exclusion of jumps at limit ordinals,
as illustrated in the ill-formed rewrite sequence 
  {\abovedisplayskip.75ex 
   \belowdisplayskip.75ex
  \begin{align*}
    \underbrace{\fun{a} \to \fun{a} \to \fun{a} \to \cdots}_{\text{$\omega$-many steps}} \;\fun{b} \to \fun{b}
  \end{align*}}%
where the rewrite system is $\atrs = \{\,\fun{a} \to \fun{a},\, \fun{b}\to\fun{b}\,\}$.
The rewrite sequence remains for $\omega$ steps at $\fun{a}$ and in the limit step `jumps' to $\fun{b}$.
To ensure connectedness at limit ordinals, the usual choices are:
\begin{enumerate}
  \item \emph{weak convergence} (also called `Cauchy convergence'), where it suffices that the sequence of terms converges towards the limit term, and
  \item \emph{strong convergence}, which additionally requires that the `rewriting activity', i.e., the depth of the rewrite steps,
    tends to infinity when approaching the limit.
\end{enumerate}
The notion of strong convergence incorporates the flavor of `progress', or `productivity',
in the sense that there is only a finite number of rewrite steps at every depth.
Moreover, it leads to a more satisfactory metatheory where redex occurrences can be 
traced over limit steps.

While infinitary rewriting has been studied extensively,
notions of infinitary equational reasoning have not received much attention.
One of the few works in this area is~\cite{kahr:2013} by Kahrs,
see \textit{Related Work} below.
The reason is that the usual definition of infinitary rewriting is
based on ordinals to index the rewrite steps, 
and hence the rewrite direction is incorporated from the start.
This is different for the framework we propose here, 
which enables us to define several natural notions:
infinitary equational reasoning, bi-infinite rewriting, and the standard concept of infinitary rewriting.
All of these have strong convergence `built-in'.

We define \emph{infinitary equational reasoning} 
with respect to a system of equations $\aes$, as a relation~${\ieq}$
on potentially infinite terms %
by the following 
mutually coinductive rules:
\begin{gather}
  \begin{aligned}
    \infer=
    {s \ieq t}
    {s \mathrel{(=_\aes \cup \ieqdown)^*} t}
    &&\qquad\qquad\qquad&&
    \infer=
    {f(s_1,s_2,\ldots,s_n) \ieqdown f(t_1,t_2,\ldots,t_n)}
    {s_1 \ieq t_1 & \cdots & s_n \ieq t_n}
  \end{aligned}
  \label{rules:intro:ieq}
\end{gather}
The relation ${\ieqdown}$ stands for infinitary equational reasoning below the root.
The coinductive nature of the rules means that the proof trees
need not be well-founded.
Reading the rules bottom-up,
the first rule allows for an arbitrary, but finite, number of rewrite steps at any finite depth (of the term tree).
The second rule enforces that we eventually proceed with the arguments, and hence the activity tends to infinity.

\begin{example}\label{ex:Ca:a}
  Let $\aes$ consist of the equation $\fun{C}(\fun{a}) = \fun{a}$.
\end{example}
  \vspace{-1.5ex}

  \begin{wrapfigure}{r}{5cm}
    \vspace{-8ex}
    \begin{framed}
    \vspace{-1.5ex}
    \begin{align*}
      \infer=
      {\fun{C}^\omega \ieq \fun{a}}
      {
        \infer=
        {\fun{C}^\omega \ieqdown \fun{C}(\fun{a})}
        {\infer={\fun{C}^\omega \ieq \fun{a}}
            {\makebox(0,0){
              \hspace{-13mm}\begin{tikzpicture}[baseline=12ex]
              \draw [->,thick,dotted] (0,0) -- (0,1mm) to[out=90,in=80] (-11mm,-1mm) to[out=-100,in=-160,looseness=1.6] (-0mm,-13mm);
              \end{tikzpicture}
            }}
        }
        &
        \fun{C}(\fun{a}) =_\aes \fun{a}
      }
    \end{align*}
    \vspace{-6ex}
    \end{framed}
    \vspace{-4ex}
    \caption{Derivation of $\fun{C}^\omega \ieq \fun{a}$.}
    \vspace{-3ex}
    \label{fig:ieq}
  \end{wrapfigure}
  \noindent
  We write $\fun{C}^\omega$ to denote the infinite term $\fun{C}(\fun{C}(\fun{C}(\ldots)))$,
  the solution of the equation $X = \fun{C}(X)$.
  Using the rules~\eqref{rules:intro:ieq}, we can derive 
  $\fun{C}^\omega \ieq \fun{a}$ as shown in Figure~\ref{fig:ieq}.
  This is an infinite proof tree as indicated by the loop 
  \raisebox{.5ex}{\tikz \draw [->,thick,dotted] (0,0) -- (5mm,0mm);}
  in which the sequence 
  $\fun{C}^\omega \ieqdown \fun{C}(\fun{a}) =_\aes \fun{a}$
  is written 
  by juxtaposing
  $\fun{C}^\omega \ieqdown \fun{C}(\fun{a})$ and $\fun{C}(\fun{a}) =_\aes \fun{a}$.

Using the greatest fixed-point constructor~$\sgfp$, we can define $\ieq$ equivalently as follows:
\begin{align}
  {\ieq} &\;\;\defd\;\; \gfp{R}{(=_\aes \cup \mathrel{\down{R}})^*} \,,
  \label{eq:ieq}
\end{align}
where $\down{R}$, corresponding to the second rule in~\eqref{rules:intro:ieq}, is defined by
\begin{align}
  \down{R} \;\defd\; \{\,\pair{f(s_1,\ldots,s_n)}{\,f(t_1,\ldots,t_n)} \mid f \in \Sigma,\; s_1 \mathrel{R} t_1,\,\ldots,s_n \mathrel{R} t_n\,\} \,\cup\, \id  \,.
\end{align}
This is a new and interesting notion of infinitary (strongly convergent) equational reasoning.

%

Now let $\atrs$ be a term rewriting system (TRS).
If we use $\to_\atrs$ instead of $=_\aes$ in the rules~\eqref{rules:intro:ieq}, %
we obtain what we call \emph{bi-infinite rewriting $\ibi$}\,:
\begin{gather}
  \begin{aligned}
    \infer=
    {s \ibi t}
    {s \mathrel{(\to_\atrs \cup \ibidown)^*} t}
    &&\qquad\qquad\qquad&&
    \infer=
    {f(s_1,s_2,\ldots,s_n) \ibidown f(t_1,t_2,\ldots,t_n)}
    {s_1 \ibi t_1 & \cdots & s_n \ibi t_n}
  \end{aligned}
  \label{rules:intro:ibi}
\end{gather}
corresponding to the following fixed-point definition:
\begin{align}
  {\ibi} &\;\;\defd\;\; \gfp{R}{(\to_\atrs \cup \mathrel{\down{R}})^*} \,.
  \label{eq:intro:ibi}
\end{align}
We write $\ibi$ to distinguish bi-infinite rewriting 
from the standard notion $\ired$ of (strongly convergent) infinitary rewriting~\cite{tere:2003}.
The symbol $\infty$ is centered above $\to$ in $\ibi$ 
to indicate that bi-infinite rewriting is `balanced', 
in the sense that it allows rewrite sequences to be extended infinitely forwards, but also infinitely backwards.
Here backwards does \emph{not} refer to reversing the arrow $\leftarrow_{\varepsilon}$.
For example, for $\atrs = \{\, \fun{C}(\fun{a}) \to \fun{a} \,\}$
we have the backward-infinite rewrite sequence $\cdots \to \fun{C}(\fun{C}(\fun{a})) \to \fun{C}(\fun{a}) \to \fun{a}$
and hence $\fun{C}^\omega \ibi \fun{a}$.
The proof tree for $\fun{C}^\omega \ibi \fun{a}$
has the same shape as the proof tree displayed in Figure~\ref{fig:ieq};
the only difference is that $\ieq$ is replaced by $\ibi$ and $\ieqdown$ by $\ibidown$.
In contrast, the standard notion $\ired$ of infinitary rewriting only takes into account forward limits
and we do \emph{not} have $\fun{C}^\omega \ired \fun{a}$.

We have the following strict inclusions:
\begin{align*}
  {\ired} \;\;\subsetneq\;\; {\ibi} \;\;\subsetneq\;\; {\ieq} \;\,.
\end{align*}
In our framework, these inclusions follow directly from the fact 
that the proof trees for $\ired$ (see below)
are a restriction of the proof trees for $\ibi$
which in turn are a restriction of the proof trees for $\ieq$. 
It is also easy to see that each inclusion is strict.
For the first, see above. For the second, just note that $\ibi$ is not symmetric.

Finally, by a further restriction of the proof trees,
we obtain the standard concept of (strongly convergent) infinitary rewriting $\ired$. 
Using least and greatest fixed-point operators, we define:
\begin{align}
  {\ired} &\;\;\defd\;\; \lfp{R}{\gfp{S}{(\to \cup \mathrel{\down{R}})^*\relcomp \down{S}}} \,,
  \label{eq:intro:ired}
\end{align}
where $\circ$ denotes relational composition.
Here $R$ is defined inductively,
and $S$ is defined coinductively.
Thus only the last step in the sequence $(\to \cup \mathrel{\down{R}})^*\relcomp \down{S}$ is coinductive.
This corresponds to the following fact about reductions $\sigma$ of ordinal length:
every strict prefix of $\sigma$ must be shorter than $\sigma$ itself,
while strict suffixes may have the same length as $\sigma$.

If we replace $\slfp$ by $\sgfp$ in \eqref{eq:intro:ired}, 
we get a definition equivalent to~$\ibi$ defined by~\eqref{eq:intro:ibi}.
To see that it is at least as strong, note that $\id \subseteq \down{S}$.

Conversely, $\ired$ can be obtained by a restriction of the proof trees obtained by the rules~\eqref{rules:intro:ibi} for $\ibi$.
Assume that in a proof tree using the rules~\eqref{rules:intro:ibi},
we mark those occurrences of $\ibidown$ that 
are followed by another step in the premise of the rule 
(i.e., those that are not the last step in the premise).
Thus we split $\ibidown$ into $\ireddown$ and~$\ireddownfin$.
Then the restriction to obtain the relation $\ired$ is to forbid infinite nesting of marked symbols~$\ireddownfin$.
This marking is made precise in the following rules:
\begin{align}
  \begin{aligned}
  \infer=
  {s \ired t}
  {s \mathrel{(\to \cup \ireddownfin)^*} \relcomp \ireddown t}
  &\hspace{.4cm}&  
  \infer=
  {f(s_1,s_2,\ldots,s_n) \ireddownmfin f(t_1,t_2,\ldots,t_n)}
  {s_1 \ired t_1 & \cdots & s_n \ired t_n}
  &\hspace{.4cm}&
  \infer=
  {s \ireddownmfin s}
  {}
  \end{aligned}
  \label{rules:restrict}
\end{align}
Here $\ireddown$ stands for infinitary rewriting below the root,
and $\ireddownfin$ is its marked version.
The symbol $\ireddownmfin$ stands for both $\ireddown$ and $\ireddownfin$.
Correspondingly, the rule in the middle is an abbreviation for two rules.
The axiom ${s \ireddown s}$ serves to `restore' reflexivity, that is,
it models the identity steps in $\down{S}$ in~\eqref{eq:intro:ired}.
Intuitively, $s \ireddownfin t$ can be thought of as an infinitary rewrite sequence 
below the root, shorter than the sequence we are defining.

We have an infinitary strongly convergent rewrite sequence from $s$ to $t$ 
if and only if $s \ired t$ can be derived by the rules~\eqref{rules:restrict}
in a (not necessarily well-founded) proof tree without infinite nesting of $\ireddownfin$,
that is, proof trees in which all paths (ascending through the proof tree) contain only
finitely many occurrences of $\ireddownfin$.
The depth requirement in the definition of strong convergence
arises naturally in the rules~\eqref{rules:restrict}, 
in particular the middle rule 
pushes the activity to the arguments.
The fact that the rules~\eqref{rules:restrict} 
capture the infinitary rewriting relation $\ired$
is a consequence of a result due to~\cite{kenn:klop:slee:vrie:1995a}
which states that every strongly convergent rewrite sequence
contains only a finite number of steps at any depth $d \in \nat$, 
in particular only a finite number of root steps~$\rstep$.
Hence every strongly convergent reduction is of the form ${(\ireddownfin \relcomp \rstep)^*} \relcomp \ireddown$
as in the premise of the first rule,
where the steps $\ireddownfin$ are reductions of shorter length.

We conclude with an example of a TRS that allows for a rewrite sequence of length beyond $\omega$.

\begin{example}\label{ex:fab}
  We consider the term rewriting system with the following rules:
  {\abovedisplayskip.75ex 
   \belowdisplayskip.75ex
  \begin{align*}
    \fun{f}(x,x) &\to \fun{D} & 
    \fun{a} &\to \fun{C}(\fun{a}) &
    \fun{b} &\to \fun{C}(\fun{b}) \,.
  \end{align*}}%
  We then have $\fun{a} \ired \fun{C}^\omega$, that is, an infinite reduction from $\fun{a}$ to $\fun{C}^\omega$ in the limit:
  \begin{align*}
    \fun{a} \to \fun{C}(\fun{a}) \to \fun{C}(\fun{C}(\fun{a})) \to \fun{C}(\fun{C}(\fun{C}(\fun{a}))) \to \cdots 
    \to^\omega \fun{C}^\omega \,.
  \end{align*}
  Using the proof rules~\eqref{rules:restrict}, we can derive $\fun{a} \ired \fun{C}^\omega$ as shown in Figure~\ref{fig:aComega}.
  \noindent
\end{example}\vspace{-1.7ex}

  \begin{wrapfigure}{r}{5.2cm}
    \vspace{-3ex}
    \begin{framed}
    \vspace{-1.5ex}
    \begin{align*}
      \infer=
      {\fun{a} \ired \fun{C}^\omega}
      {
        \fun{a} \rstep \fun{C}(\fun{a})
        &
        \infer=
        {\fun{C}(\fun{a}) \ireddown \fun{C}^\omega}
        {\infer={\fun{a} \ired \fun{C}^\omega}
            {\makebox(0,0){
              \hspace{13mm}\begin{tikzpicture}[baseline=11ex]
              \draw [->,thick,dotted] (0,0) -- (0,1mm) to[out=90,in=100] (11mm,-1mm) to[out=-80,in=-20,looseness=1.4] (-0mm,-13mm);
              \end{tikzpicture}
            }}
        }
      }
    \end{align*}
    \vspace{-6ex}
    \end{framed}
    \vspace{-4ex}
    \caption{A reduction $\fun{a} \ired \fun{C}^\omega$.}
    \vspace{-3ex}
    \label{fig:aComega}
  \end{wrapfigure}
  The proof tree in Figure~\ref{fig:aComega} can be described as follows:
  We have an infinitary rewrite sequence from $\fun{a}$ to~$\fun{C}^\omega$
  since we have a root step from $\fun{a}$ to $\fun{C}(\fun{a})$, and
  an infinitary reduction below the root from $\fun{C}(\fun{a})$ to $\fun{C^\omega}$.
  The latter reduction $\fun{C}(\fun{a}) \ireddown \fun{C^\omega}$ is in turn witnessed
  by the infinitary rewrite sequence $\fun{a} \ired \fun{C}^\omega$ 
  on the direct subterms.

  We also have the following reduction, now of length $\omega+1$:
  \begin{align*}
    \fun{f}(\fun{a},\fun{b}) 
    \to \fun{f}(\fun{C}(\fun{a}),\fun{b}) 
    \to \fun{f}(\fun{C}(\fun{a}),\fun{C}(\fun{b}))
    \to \cdots \to^\omega \fun{f}(\fun{C}^\omega,\fun{C}^\omega)
    \to \fun{D} \,.
  \end{align*}
  That is, after an infinite rewrite sequence of length $\omega$, 
  we reach the limit term $\fun{f}(\fun{C}^\omega,\fun{C}^\omega)$,
  and we then continue with a rewrite step from $\fun{f}(\fun{C}^\omega,\fun{C}^\omega)$ to $\fun{D}$.

  \begin{wrapfigure}{r}{8.2cm}
    \vspace{-4ex}
    \begin{framed}
    \vspace{-1.5ex}
    \begin{align*}
      \infer=
      {\fun{f}(\fun{a},\fun{b}) \ired \fun{D}}
      {
        \infer=
        {\fun{f}(\fun{a},\fun{b}) \ireddownfin \fun{f}(\fun{C}^\omega,\fun{C}^\omega)}
        {
          \infer=
          {\fun{a} \ired \fun{C}^\omega}
          {\text{like Figure~\ref{fig:aComega}}}
          &
          \infer=
          {\fun{b} \ired \fun{C}^\omega}
          {\text{like Figure~\ref{fig:aComega}}}
        }
        & 
        \fun{f}(\fun{C}^\omega,\fun{C}^\omega) \rstep \fun{D}
      }
    \end{align*}
    \vspace{-5.5ex}
    \end{framed}
    \vspace{-4ex}
    \caption{A reduction $\fun{f}(\fun{a},\fun{b}) \ired \fun{D}$.}
    \vspace{-3ex}
    \label{fig:fab}
  \end{wrapfigure}

  Figure~\ref{fig:fab} shows how this rewrite sequence \mbox{$\fun{f}(\fun{a},\fun{b}) \ired \fun{D}$}
  can be derived in our setup.
  We note that the rewrite sequence $\fun{f}(\fun{a},\fun{b}) \ired \fun{D}$
  cannot be `compressed' to length $\omega$. 
  So there is no reduction $\fun{f}(\fun{a},\fun{b}) \to^{\le \omega} \fun{D}$.

\paragraph*{Related Work}
While 
a coinductive treatment of infinitary rewriting is not new~\cite{coqu:1996,joac:2004,endr:polo:2012b},
the previous approaches 
only capture rewrite sequences of length at most $\omega$. 
The coinductive framework that we present here captures all strongly
convergent rewrite sequences of arbitrary ordinal length.

From the topological perspective, various notions of infinitary rewriting 
and infinitary equational reasoning have been studied in~\cite{kahr:2013}.
%
The closure operator $S_E$ from \cite{kahr:2013} is closely related to our notion of infinitary equational reasoning $\ieq$.
The operator $S_E$ is defined by $S_E(R) = (S \circ E)^*(R)$ where 
$E(R)$ is the equivalence closure of $R$, and
$S(R)$ is the strongly convergent rewrite relation obtained from (single steps) $R$. 
Although defined in very different ways, 
the relations $S_E(\to)$ and $\ieq$ typically coincide.
In Section~\ref{sec:venn} we show that ${S_E(\to)} \subseteq {\ieq}$ for every rewrite system
and we give an example for which the inclusion is strict. 

Martijn Vermaat has formalized infinitary rewriting using metric convergence (in place of strong convergence) 
in the Coq proof assistant~\cite{verm:2010}, and proved that weakly orthogonal infinitary rewriting 
does not have the property $\mathrm{UN}$ of unique normal forms, see~\cite{endr:hend:grab:klop:oost:2014}.
While his formalization could be extended to strong convergence,
it remains to be investigated to what extent it can be used 
for the further development of the theory of infinitary rewriting.

Ketema and Simonsen~\cite{kete:simo:2013} 
introduce the notion of `computable infinite reductions'~\cite{kete:simo:2013},
where terms as well as reductions are computable,
and provide a Haskell implementation of the Compression Lemma for this notion of reduction.

\paragraph*{Outline}
In Section~\ref{sec:prelims} we introduce infinitary rewriting in the usual way
based on ordinals,
and with convergence at every limit ordinal.
Section~\ref{sec:coinduction} is a short explanation of (co)induction and fixed-point rules.
The two new definitions of infinitary rewriting~$\ired$ based on
mixing induction and coinduction, as well as their equivalence, are spelled out in Section~\ref{sec:itrs}.
Then, in Section~\ref{sec:equivalence},
we prove the equivalence of these new definitions of infinitary rewriting
with the standard definition.
In Section~\ref{sec:ieq} we present the above introduced relations~$\ieq$ and~$\ibi$
of infinitary equational reasoning and bi-infinite rewriting.
In Section~\ref{sec:venn} we compare the three relations $\ieq$, $\ibi$ and $\ired$.
As an application, we show in Section~\ref{sec:coq} that our framework
is suitable for formalizations in theorem provers.
We conclude in Section~\ref{sec:conclusion}.

\section{Preliminaries on Term Rewriting}\label{sec:prelims}

We give a brief introduction to infinitary rewriting.
For further reading on infinitary rewriting we refer to
\cite{klop:vrij:2005,tere:2003,bare:klop:2009,endr:hend:klop:2012},
for an introduction to finitary rewriting to
\cite{klop:1992,tere:2003,baad:nipk:1998,bare:1977}.

A \emph{signature $\Sigma$} is a set of symbols $f$ each having a fixed arity $\arity{f} \in \nat$.
Let $\avars$ be an infinite set of variables such that $\avars \cap \Sigma = \emptyset$.
The set $\iter{\Sigma}{\avars}$ of (finite and) \emph{infinite terms} 
over $\Sigma$ and $\avars$ is coinductively defined by the following grammar:
\begin{align*}
  T \coBNFis x \BNFor f(\underbrace{T,\ldots,T}_{\text{$\arity{f}$ times}}) \; \text{($x \in \avars$, $f \in \Sigma$)} \,.
\end{align*}
This means that $\iter{\Sigma}{\avars}$ is defined as the largest set $T$ such that 
for all $t \in T$, either $t \in \avars$ or $t = f(t_1,t_2,\ldots,t_n)$ for some $f \in \Sigma$ with $\arity{f} = n$
and $t_1,t_2,\ldots,t_n \in T$.
So the grammar rules may be applied an infinite number of times, and equality on the terms is bisimilarity.
See further Section~\ref{sec:coinduction} for a brief introduction to coinduction.

We write $\id$ for the identity relation on terms, $\id \defd \{\pair{s}{s} \mid s \in \iter{\Sigma}{\avars}\}$.

\begin{mremark}
  Alternatively, the set $\iter{\Sigma}{\avars}$ arises from the set of finite terms, $\ter{\Sigma}{\avars}$, 
  by metric completion,
  using the well-known distance function $\smetric$ 
  defined by
  $\metric{t}{s} = 2^{-n}$ if the $n$-th level of the terms $t,s \in \ter{\Sigma}{\avars}$ (viewed as labeled trees) 
  is the first level where a difference appears, 
  in case $t$ and $s$ are not identical; furthermore, $\metric{t}{t} = 0$.
  It is standard that this construction yields $\pair{\ter{\Sigma}{\avars}}{\smetric}$ as a metric space. 
  Now infinite terms are obtained by taking the completion of this metric space, 
  and they are represented by infinite trees. 
  We will refer to the complete metric space arising in this way as $\pair{\iter{\Sigma}{\avars}}{\smetric}$, 
  where $\iter{\Sigma}{\avars}$ is the set of finite and infinite terms over~$\Sigma$.
\end{mremark}

Let $t \in \iter{\Sigma}{\avars}$ be a finite or infinite term.
The set of \emph{positions $\pos{t}\subseteq \nat^*$ of $t$} is
defined by: $\posemp \in \pos{t}$, and 
$i p \in \pos{t}$ whenever $t = f(t_1,\ldots,t_n)$ 
with $1 \le i \le n$ and $p \in \pos{t_i}$.
For $p \in \pos{t}$, the \emph{subterm} $\subtrm{t}{p}$ of $t$ at position $p$
is defined by
$\subtrm{t}{\posemp} = t$ and
$\subtrm{f(t_1,\ldots,t_n)}{ip} = \subtrm{t_i}{p}$.
The set of \emph{variables $\vars{t}\subseteq \avars$ of $t$} is %
$\vars{t} = \{x \in \avars \mid \exists \, p\in \pos{t}.\,\subtrm{t}{p} = x\}$.

A \emph{substitution $\asubst$} is a map $\asubst : \avars \to \iter{\Sigma}{\avars}$;
its domain 
is extended to
$\iter{\Sigma}{\avars}$ by corecursion: %
$\asubst(f(t_1,\ldots,t_n)) = f(\asubst(t_1),\ldots,\asubst(t_n))$.
For a term~$t$ and a substitution~$\asubst$, we write $t\sigma$ for $\sigma(t)$.
We write $x \mapsto s$ for the substitution defined by
$\asubst(x) = s$ and $\asubst(y) = y$ for all $y \ne x$.
Let $\hole$ be a fresh variable.
A \emph{context} $C$ is a term $\iter{\Sigma}{\avars \cup \{\hole\}}$
containing precisely one occurrence of %
$\hole$.
For contexts $C$ and terms $s$ we write $C[s]$ for $C(\hole \mapsto s)$.

A \emph{rewrite rule $\ell \to r$} over $\Sigma$ and $\avars$ is a pair 
$(\ell,r)$
of terms $\ell,r\in\iter{\Sigma}{\avars}$ such that the left-hand side $\ell$ is not a variable ($\ell \not\in \avars$),
and all variables in the right-hand side $r$ occur in $\ell$, $\vars{r} \subseteq \vars{\ell}$.
Note that we require neither the left-hand side nor the right-hand side of a rule
to be finite.

A \emph{term rewriting system (TRS) $\atrs$} over $\Sigma$ and $\avars$
is a set of rewrite rules over $\Sigma$ and $\avars$.
A TRS induces a rewrite relation on the set of terms as follows.
For $\apos \in \nat^\ast$ we define ${\rerat{\atrs}{\apos}} \subseteq \iter{\Sigma}{\avars} \times \iter{\Sigma}{\avars}$, 
a \emph{rewrite step at position $\apos$}, by
$
  C[\ell\sigma] \rerat{\atrs}{\apos} C[r\sigma]
$
if $C$ is a context with $\subtrm{C}{\apos} = \hole$,\; $\ell \to r \in \atrs$, and $\sigma : \avars \to \iter{\Sigma}{\avars}$.
We write $\rstep$ for \emph{root steps}, %
${\rstep} = \{\,(\ell\sigma,r\sigma) \mid \ell \to r \in \atrs,\; \text{$\sigma$ a substitution}\,\}$.
We write $s \to_{\atrs} t$ if $s \rerat{\atrs}{\apos} t$ for some $\apos\in\nat^\ast$.
A \emph{normal form} is a term without a redex occurrence,
that is, a term that is not of the form $C[\ell\sigma]$ for some context $C$, 
rule $\ell \to r\in \atrs$ and substitution $\sigma$.

A natural consequence of this construction is %
the notion of \emph{weak convergence}: 
we say that $t_0 \to t_1 \to t_2 \to \cdots$ is an infinite reduction sequence with limit $t$, 
if $t$ is the limit of the sequence $t_0,t_1,t_2, \ldots$ in the usual sense of metric convergence. 
We use \emph{strong convergence}, which in addition to weak convergence, 
requires that the depth of the redexes contracted in the 
successive steps tends to infinity when approaching a
limit ordinal from below.
So this rules out the possibility that the action of redex contraction stays confined at the top, 
or stagnates at some finite level of depth. 

\begin{definition}\label{def:itrs:standard}
  A \emph{transfinite rewrite sequence} (of ordinal length $\alpha$)
  is a sequence of rewrite steps 
  $(t_{\beta} \rerat{\atrs}{\apos_{\beta}} t_{\beta+1})_{\beta < \alpha}$
  such that for every limit ordinal $\lambda < \alpha$ we have that 
  if $\beta$ approaches $\lambda$ from below, then
  \begin{enumerate}[label=(\roman*)]
    \item\label{item:distance}
      the distance $\metric{t_\beta}{t_\lambda}$ tends to $0$ 
      and, moreover,
    \item\label{item:depth}
      the depth of the rewrite action, i.e., 
      the length of the position $\apos_\beta$, 
      tends to infinity.
  \end{enumerate}
  The sequence is called \emph{strongly convergent} 
  if $\alpha$ is a successor ordinal, 
  or there exists a term $t_\alpha$ such that
  the conditions~\ref{item:distance} and~\ref{item:depth}
  are fulfilled for every limit ordinal $\lambda \leq \alpha$;
  we then write $t_0\iredord t_\alpha$.
  The subscript $\mit{ord}$ is used in order to distinguish 
  $\iredord$ from the equivalent %
  relation $\ired$ 
  as defined in Definition~\ref{def:ired:fixedpoint}. 
  We sometimes write $t_0\redord^{\alpha} t_\alpha$ 
  to explicitly indicate the length $\alpha$ of the sequence.
  The sequence is called \emph{divergent} if it is not strongly convergent.
\end{definition}

There are several reasons why strong convergence is beneficial; 
the foremost being that in this way we can define the notion of \emph{descendant} 
(also \emph{residual}) over limit ordinals. 
Also the well-known Parallel Moves Lemma
and the Compression Lemma
fail for weak convergence, see~\cite{simo:2004} and \cite{ders:kapl:plai:1991} respectively.

\section{(Co)induction and Fixed Points}\label{sec:coinduction}

We briefly introduce the relevant concepts from
(co)algebra and (co)induction that will be used later throughout this
paper. 
For a more thorough introduction, we refer to \cite{jaco:rutt:2011}.
There will be two main points where coinduction will play a role, in the definition of terms and in the definition 
of term rewriting. 

Terms are usually defined with respect to a type constructor~$F$. 
For instance, consider the type of lists with elements in a given set $A$, given in a functional programming style:
\begin{verbatim}
  type List a = Nil | Cons a (List a)
\end{verbatim}
The above grammar corresponds to the type constructor
$F(X) = 1 + A \times X$ where the $1$ is used as a placeholder for the empty list {\tt Nil} 
and the second component represents the {\tt Cons} constructor.
Such a grammar can be interpreted in two ways: 
The \emph{inductive} interpretation yields as terms the set of finite lists,
and corresponds to the \emph{least fixed point} of~$F$.
The \emph{coinductive} interpretation yields as terms
the set of all finite or infinite lists,
and corresponds to the \emph{greatest fixed point} of~$F$.
More generally, the inductive interpretation of a type constructor 
yields finite terms (with well-founded syntax trees), and
dually, the coinductive interpretation 
yields possibly infinite terms.
For readers familiar with the categorical definitions of algebras
and coalgebras, these two interpretations amount to defining
finite terms as the \emph{initial $F$-algebra}, 
and possibly infinite terms as the \emph{final $F$-coalgebra}.

Formally, term rewriting is a relation on a set $T$ of terms,
and hence an element of the complete lattice $\tlat := \Pow(T \times T)$, 
the powerset of $T \times T$.
Relations on terms can thus be defined using least and greatest fixed points
of monotone operators on $\tlat$.
In this setting, an inductively defined relation is 
a least fixed point $\lfp{X}{F(X)}$ of a monotone $F : \tlat \to \tlat$.
Dually, a coinductively defined relation is
a greatest fixed point $\gfp{X}{F(X})$ of a monotone $F : \tlat \to \tlat$.
Coinduction, and similarly induction, can be formulated as proof rules:
\begin{gather}
  \begin{aligned}
    \frac{X \leq F(X)}{X \leq \gfp{Y}{F(Y)}}(\nu\text{-rule}) 
    &&\qquad\qquad\qquad&&
    \frac{F(X) \leq X}{\lfp{Y}{F(Y)} \leq X}(\mu\text{-rule})
  \end{aligned}
  \label{eq:coind-ind-rules}
\end{gather}
These rules express the fact that $\gfp{Y}{F(Y)}$ is the greatest post-fixed point of $F$,
and $\lfp{Y}{F(Y)}$ is the least pre-fixed point of $F$.

\section{New Definitions of Infinitary Term Rewriting}\label{sec:itrs}

We present two new definitions of infinitary rewriting \mbox{$s \ired t$},
based on mixing induction and coinduction, and prove their equivalence.
In Section~\ref{sec:equivalence} we show they are equivalent to the standard definition based on ordinals.
We summarize the definitions:
\begin{enumerate}[label=\emph{\Alph*}.]
  \item \emph{Derivation Rules.}
        First, we define $s \ired t$ via a syntactic restriction on the proof trees 
        that arise from the coinductive rules~\eqref{rules:restrict}.
        The restriction excludes all proof trees that contain ascending paths
        with an infinite number of marked symbols.
  \medskip
  \item \emph{Mixed Induction and Coinduction.}
        Second, we define $s \ired t$ based on mutually mixing induction and coinduction,
        that is, least fixed points $\mu$ and greatest fixed points $\nu$.
\end{enumerate}
\noindent
In contrast to previous coinductive definitions~\cite{coqu:1996,joac:2004,endr:polo:2012b}, 
the setup proposed here captures all strongly convergent rewrite sequences (of arbitrary ordinal length).

Throughout this section, we fix a signature $\Sigma$ and a term
rewriting system $\atrs$ over $\Sigma$. 
We also abbreviate $T \defd \iter{\Sigma}{\avars}$.
\begin{notation}\label{not:transitivity}
  Instead of introducing separate derivation rules for transitivity,
  we write a reduction of the form
  $s_0 \rightsquigarrow s_1 \rightsquigarrow \cdots \rightsquigarrow s_n$
  as a sequence of single steps: 
  {\abovedisplayskip2ex 
   \belowdisplayskip.75ex
  \begin{align*}
    \infer
    {\text{conclusion}}
    {s_0 \rightsquigarrow s_1 \quad s_1 \rightsquigarrow s_2 \quad\cdots\quad s_{n-1} \rightsquigarrow s_n}
  \end{align*}}%
  \noindent
  This allows us to write the subproof immediately above a single step.
\end{notation}

\begin{definition}\label{def:lifting}
  For a relation $R \subseteq T \times T$ %
  we define its \emph{lifting $\down{R}$} by
  \begin{align*}
    \down{R} \;\defd\; \{\,\pair{f(s_1,\ldots,s_n)}{\,f(t_1,\ldots,t_n)} \mid f \in \Sigma ,\, \arity{f} = n\,, s_1 \mathrel{R} t_1,\ldots,s_n \mathrel{R} t_n\,\}
                \,\cup\, \id \,.
  \end{align*}
\end{definition}

\subsection{Derivation Rules}

\begin{definition}\label{def:ired:restrict}
  We define the relation ${\ired} \subset T \times T$ as follows.
  We have $s \ired t$ if there exists a 
  (finite or infinite) proof tree $\delta$ deriving $s \ired t$ using the following five rules:
  \begin{align*}
    \infer=[\rsplit]
    {s \ired t}
    {s \mathrel{(\rstep \cup \ireddownfin)^*} \relcomp \ireddown t}
    &&  
    \infer=[\rlift]
    {f(s_1,s_2,\ldots,s_n) \ireddownmfin f(t_1,t_2,\ldots,t_n)}
    {s_1 \ired t_1 & \cdots & s_n \ired t_n}
    &&
    \infer=[\rid]
    {s \ireddownmfin s}
    {}
  \end{align*}
  such that $\delta$ does not contain an infinite nesting of $\ireddownfin$,
  that is, such that there exists no path ascending 
  through the proof tree that meets an infinite number of symbols $\ireddownfin$. 
  The symbol $\ireddownmfin$ stands for $\ireddown$ or $\ireddownfin$;
  so the second rule is an abbreviation for two rules; similarly for the third rule.
\end{definition}

We give some intuition for the rules in Definition~\ref{def:ired:restrict}.
The relations $\ireddownfin$ and $\ireddown$ are infinitary reductions below the root.
We use $\ireddownfin$ for constructing parts of the prefix (between root steps), and
$\ireddown$ for constructing a suffix of the reduction that we are defining.
When thinking of ordinal indexed rewrite sequences $\sigma$, 
a suffix of $\sigma$ can have length equal to $\sigma$,
while the length of every prefix of $\sigma$ must be strictly smaller than the length of $\sigma$.
The five rules (\rsplit, and the two versions of \rlift\ and \rid) can be interpreted as follows:
\begin{enumerate}
  \item 
    The \rsplit-rule:
    the term $s$ rewrites infinitarily to $t$, $s \ired t$, 
    if $s$ rewrites to $t$ using a finite sequence of (a) root steps,
    and (b) infinitary reductions $\ireddown$ below the root
    (where infinitary reductions preceding root steps must be shorter than the derived reduction).
    \smallskip
  \item 
    The \rlift-rules: 
    the term $s$ rewrites infinitarily to $t$ below the root, $s \ireddownmfin t$, %
    if the terms are of the shape $s = f(s_1,s_2,\ldots,s_n)$ and $t = f(t_1,t_2,\ldots,t_n)$
    and there exist reductions %
    on the arguments:
    $s_1 \ired t_1$, \ldots, $s_n \ired t_n$.
    \smallskip
  \item 
    The \rid-rules allow for the rewrite relations~$\ireddownmfin$ %
    to be reflexive,
    and this in turn yields reflexivity of $\ired$.
    For variable-free terms, reflexivity can already be derived using the other %
    rules.
    For terms with variables, this %
    rule is needed 
    (unless we treat variables as constant symbols).
\end{enumerate}
For an example of a proof tree, 
we refer to Example~\ref{ex:fab} in the introduction. %

\subsection{Mixed Induction and Coinduction}

The next definition is based on mixing induction and coinduction. 
The inductive part is used to model the restriction 
to finite nesting of $\ireddownfin$ in the proofs in Definition~\ref{def:ired:restrict}.
The induction corresponds to a least fixed point $\slfp$,
while a coinductive rule to a greatest fixed point $\sgfp$.

\begin{definition}\label{def:ired:fixedpoint}
  We define the relation ${\ired} \subseteq T \times T$ by
  \begin{align*}
      {\ired} \;\;\defd\;\; \lfp{R}{\gfp{S}{(\rstep \cup \mathrel{\down{R}})^*\relcomp \down{S}}} \,.
  \end{align*}
\end{definition}

We argue why ${\ired}$ is well-defined.
Let $L \defd \Pow(T\times T)$ be the set of all relations on terms.
Define functions $G : L \times L \to L$ and $F : L \to L$ by
\begin{align}
  G(R,S) \defd (\rstep \cup \mathrel{\down{R}})^*\relcomp \down{S}
    \quad\text{ and }\quad
  F(R) \defd \gfp{S}{G(R,S)} = \gfp{S}{(\rstep \cup \mathrel{\down{R}})^*\relcomp \down{S}} \,.
  \label{eq:G:F}
\end{align}
Then we have  %
${\ired} \;=\; \lfp{R}{F(R)} \;=\; \lfp{R}{\gfp{S}{G(R,S)}} \;=\; 
\lfp{R}{\gfp{S}{(\rstep \cup \mathrel{\down{R}})^*\relcomp \down{S}}}$.
It can easily be verified that $F$ and $G$ are monotone (in all their arguments).
Recall that a function~$H$ over sets is monotone 
if $X \subseteq Y$ implies $H(\ldots,X,\ldots) \subseteq H(\ldots,Y,\ldots)$.
Hence $F$ and $G$ have unique least and greatest fixed points.

\subsection{Equivalence}

We show equivalence of Definitions~\ref{def:ired:restrict} and~\ref{def:ired:fixedpoint}.
Intuitively, the $\slfp R$ in the fixed point definition 
corresponds to the nesting restriction in the definition using derivation rules. 
If one thinks of Definition~\ref{def:ired:fixedpoint} as $\lfp{R}{F(R)}$
with $F(R) = \gfp{S}{G(R,S)}$
(see equation~\eqref{eq:G:F}), then $F^{n+1}(\emptyset)$ 
are all infinite rewrite sequences that can be derived 
using proof trees where the nesting depth of the marked symbol $\ireddownfin$ is at most $n$. 

To avoid confusion we write $\ired_{\nest}$ for the relation $\ired$ defined in Definition~\ref{def:ired:restrict},
and $\ired_{\fp}$ for the relation $\ired$ defined in Definition~\ref{def:ired:fixedpoint}.
We show ${\ired_{\nest}} = {\ired_{\fp}}$.
Definition~\ref{def:ired:restrict} requires that the nesting structure of $\ireddownfin_\nest$
in proof trees is well-founded. As a consequence, we can associate to every proof tree
a (countable) ordinal that allows to embed the nesting structure in an order-preserving way.
We use $\omega_1$ to denote the first uncountable ordinal,
and we view ordinals as the set of all smaller ordinals
(then the elements of $\omega_1$ are all countable ordinals).

\begin{definition}
  Let $\delta$ be a proof tree as in Definition~\ref{def:ired:restrict},
  and let $\alpha$ be an ordinal.
  An \emph{$\alpha$-labeling of $\delta$} 
  is a labeling of all symbols $\ireddownfin_\nest$ in $\delta$ with elements from $\alpha$
  such that
  each label is strictly greater than all labels occurring in the subtrees (all labels above).
\end{definition}

\begin{lemma}\label{lem:nest}
  Every proof tree as in Definition~\ref{def:ired:restrict}
  has an $\alpha$-labeling for some $\alpha \in \omega_1$.
\end{lemma}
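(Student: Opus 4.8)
The goal is to assign to every proof tree $\delta$ (as in Definition~\ref{def:ired:restrict}) a countable ordinal $\alpha$ together with an $\alpha$-labeling, i.e.\ a labeling of all occurrences of $\ireddownfin_\nest$ by elements of $\alpha$ that strictly decreases as one ascends through the tree (each label strictly greater than every label above it in the subtrees). The key hypothesis to exploit is precisely the nesting restriction: no ascending path meets infinitely many $\ireddownfin_\nest$. I would phrase this as well-foundedness of a suitable order on the occurrences of $\ireddownfin_\nest$ and then invoke the standard fact that a well-founded partial order admits a strictly order-preserving map into the ordinals (its rank function).

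\textbf{Key steps.}
First I would define a binary relation on the set $N$ of occurrences of the marked symbol $\ireddownfin_\nest$ in $\delta$: say that occurrence $m$ lies \emph{above} occurrence $n$ if $m$ appears in the subtree rooted at (the premise of) $n$. This is a strict partial order. The crucial claim is that this order is \emph{well-founded}, i.e.\ it has no infinite ascending chain $n_0, n_1, n_2, \ldots$ with each $n_{i+1}$ above $n_i$. Such a chain would, by connecting the tree paths witnessing each ``above'' relation, yield a single ascending path through $\delta$ meeting infinitely many $\ireddownfin_\nest$, contradicting the defining restriction of Definition~\ref{def:ired:restrict}. This is the heart of the argument and the one point needing care: I must check that the concatenation of the finite path-segments between successive marked symbols genuinely forms one infinite ascending path in the proof tree, which follows because each $n_{i+1}$ occurring in the subtree of $n_i$ means there is a finite path from $n_i$ up to $n_{i+1}$.

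Second, with well-foundedness established, I would take the rank function $\mit{rk} : N \to \mathrm{Ord}$, defined by $\mit{rk}(n) = \sup\{\,\mit{rk}(m)+1 \mid m \text{ above } n\,\}$, and set the label of $n$ to be $\mit{rk}(n)$. By construction each label strictly exceeds all labels above it, so this is an $\alpha$-labeling for $\alpha = \sup_{n \in N} (\mit{rk}(n)+1)$. It remains to verify $\alpha \in \omega_1$, i.e.\ that $\alpha$ is countable. This follows because the proof tree is finitely branching (each rule has finitely many premises) and hence has countably many nodes, so $N$ is countable; a well-founded order on a countable set has rank bounded by a countable ordinal.

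\textbf{Main obstacle.}
The technically delicate step is the well-foundedness claim, and in particular the passage from ``an infinite ascending chain in the \emph{nesting} order on $N$'' to ``an infinite ascending \emph{path} through $\delta$ meeting infinitely many $\ireddownfin_\nest$.'' I would be careful to note that the relevant paths are those ascending through the proof tree (toward the leaves), matching the phrasing of Definition~\ref{def:ired:restrict}, and that since $\delta$ is finitely branching I may freely stitch the finite segments together into one genuine infinite path. Establishing countability of the rank is then routine given finite branching. I expect no difficulty in the rank-function construction itself, which is entirely standard once well-foundedness is in hand.
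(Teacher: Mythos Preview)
Your proposal is correct and follows essentially the same approach as the paper: identify the set of occurrences of $\ireddownfin$, observe that the nesting (prefix) order on them is well-founded by the defining restriction on proof trees, and use this well-foundedness together with countability of the node set to obtain a labeling by a countable ordinal. The only cosmetic difference is that you invoke the rank function directly, whereas the paper phrases the same step as extending the well-founded partial order to a total well-order isomorphic to some $\alpha\in\omega_1$; your explicit justification of countability via finite branching is a detail the paper leaves implicit.
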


\begin{proof}
  Let $\delta$ be a proof tree and 
  let $L(\delta)$ be the set positions of symbols $\ireddownfin_\nest$ in $t$.
  For positions $p,q \in L(\delta)$ we write $p < q$ if $p$ is a strict prefix of $q$.
  Then we have that $>$ is well-founded, that is, 
  there is no infinite sequence $p_0 < p_1 < p_2 < \cdots$ with $p_i \in L(\delta)$ ($i \ge 0$)
  as a consequence of the nesting restriction on $\ireddownfin_\nest$. 
  The the extension of this well-founded order on $L(t)$ 
  to a total, well-founded order is isomorphic to an ordinal $\alpha$,
  and $\alpha < \omega_1$ since $L(t)$ is countable. 
\end{proof}

\begin{definition}
  Let $\delta$ be a proof tree as in Definition~\ref{def:ired:restrict}.
  We define the \emph{nesting depth} of $\delta$ as 
  the least ordinal $\alpha \in \omega_1$ such that $\delta$ admits an $\alpha$-labeling.
  For every $\alpha \le \omega_1$, we define a relation
  ${\ired_{\alpha,\nest}}  \subseteq {\ired_{\nest}}$
  as follows:
  $s \ired_{\alpha,\nest} t$ whenever $s \ired_\nest t$
  can be derived using a proof with nesting depth $< \alpha$.
  Likewise we define relations
  ${\ireddown_{\alpha,\nest}}$ and
  ${\ireddownfin_{\alpha,\nest}}$\,.
\end{definition}

As a direct consequence of Lemma~\ref{lem:nest} we have:
\begin{corollary}\label{cor:nest:omega1}
  We have ${\ired_{\omega_1,\nest}} = {\ired_\nest}$.
\end{corollary}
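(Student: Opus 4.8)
The plan is to prove the two inclusions separately, observing that all the real work has already been carried out in Lemma~\ref{lem:nest}; the corollary is essentially a bookkeeping step that reconciles the definition of nesting depth with membership in $\ired_{\omega_1,\nest}$.

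For the inclusion ${\ired_{\omega_1,\nest}} \subseteq {\ired_\nest}$, I would simply appeal to the definition: for every ordinal $\alpha \le \omega_1$ the relation $\ired_{\alpha,\nest}$ is introduced as a restriction of $\ired_\nest$, admitting only those derivations whose nesting depth is $< \alpha$. Hence $\ired_{\alpha,\nest} \subseteq \ired_\nest$ for all such $\alpha$, and in particular ${\ired_{\omega_1,\nest}} \subseteq {\ired_\nest}$ holds by construction, with no further argument required.

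For the reverse inclusion ${\ired_\nest} \subseteq {\ired_{\omega_1,\nest}}$, I would take an arbitrary pair with $s \ired_\nest t$. By Definition~\ref{def:ired:restrict} this is witnessed by a proof tree $\delta$ deriving $s \ired t$ in which there is no infinite nesting of $\ireddownfin_\nest$. Applying Lemma~\ref{lem:nest}, the set of ordinals $\alpha \in \omega_1$ for which $\delta$ admits an $\alpha$-labeling is nonempty; consequently the nesting depth of $\delta$—defined as the least such $\alpha$—is well-defined and, being an element of $\omega_1$, is a countable ordinal, hence strictly below $\omega_1$. By the definition of $\ired_{\omega_1,\nest}$ (derivability by a proof of nesting depth $< \omega_1$) this yields $s \ired_{\omega_1,\nest} t$, as desired.

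I do not anticipate a genuine obstacle, since the substance of the statement is precisely the content of Lemma~\ref{lem:nest}, namely that the well-founded prefix order on the occurrences of $\ireddownfin_\nest$ in any admissible proof tree embeds into a countable ordinal. The only point deserving care is the matching of quantifiers: one must check that the lemma's conclusion, ``$\delta$ admits an $\alpha$-labeling for some $\alpha \in \omega_1$,'' guarantees exactly what membership in $\ired_{\omega_1,\nest}$ requires, ``$\delta$ has nesting depth $< \omega_1$.'' This is immediate because $\omega_1$ is a limit ordinal, so ``nesting depth an element of $\omega_1$'' and ``nesting depth $< \omega_1$'' coincide, and the nesting depth is well-defined precisely when the lemma applies.
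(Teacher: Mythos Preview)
Your proposal is correct and matches the paper's approach exactly: the paper simply states the corollary as ``a direct consequence of Lemma~\ref{lem:nest}'' without further proof, and your two-inclusion argument is precisely the unpacking of that claim. The only content is the reverse inclusion, which you correctly derive from the lemma together with the definition of nesting depth; the forward inclusion is immediate from the definition of $\ired_{\alpha,\nest}$ as a subrelation of $\ired_\nest$.
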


\begin{theorem}\label{thm:equiv:der:fp}
  Definitions~\ref{def:ired:restrict} and~\ref{def:ired:fixedpoint} 
  define the same relation, ${\ired_{\nest}} = {\ired_{\fp}}$. %
\end{theorem}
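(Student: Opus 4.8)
The plan is to prove the two inclusions ${\ired_{\nest}} \subseteq {\ired_{\fp}}$ and ${\ired_{\fp}} \subseteq {\ired_{\nest}}$ separately, using the ordinal stratification of $\ired_{\nest}$ by nesting depth (Corollary~\ref{cor:nest:omega1}) to match it against the iterative construction of the least fixed point $\lfp{R}{F(R)}$. The key observation is that the outer $\slfp R$ and the nesting restriction on $\ireddownfin_\nest$ are two encodings of the same well-foundedness condition, while the inner $\sgfp S$ and the plain (unmarked) $\ireddown_\nest$ are two encodings of the same coinductive, possibly-non-well-founded structure. Concretely, I expect to prove by transfinite induction on $\alpha \in \omega_1$ the key correspondence
\begin{align*}
  {\ired_{\alpha,\nest}} \;\subseteq\; F^\alpha(\emptyset) \qquad\text{and conversely}\qquad F^\alpha(\emptyset) \;\subseteq\; {\ired_{\alpha,\nest}}\,,
\end{align*}
where $F^\alpha$ denotes the $\alpha$-th iterate of $F$ in the approximation of $\lfp{R}{F(R)}$ from below (with $F^\lambda(\emptyset) = \bigcup_{\beta<\lambda} F^\beta(\emptyset)$ at limits). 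Since $\ired_{\fp} = \lfp{R}{F(R)} = \bigcup_{\alpha} F^\alpha(\emptyset)$ and ${\ired_\nest} = {\ired_{\omega_1,\nest}} = \bigcup_{\alpha<\omega_1} {\ired_{\alpha,\nest}}$, the theorem follows by taking the union over all $\alpha$.

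For the forward inclusion ${\ired_{\nest}} \subseteq {\ired_{\fp}}$, I would fix a proof tree $\delta$ deriving $s \ired_\nest t$ of nesting depth $\alpha$ and show $\pair{s}{t} \in F^{\alpha+1}(\emptyset)$ by transfinite induction on $\alpha$. The point is that the relation $R$ in the $i$-th iterate $F^{i}(\emptyset)$ captures exactly those reductions whose proof trees have strictly smaller nesting depth; these are precisely the reductions usable under a marked symbol $\ireddownfin$. So at nesting depth $\alpha$, every occurrence of $\ireddownfin_\nest$ in $\delta$ refers (via the \rlift-rule) to subproofs of strictly smaller nesting depth, hence to pairs already in $F^{\beta}(\emptyset)$ for some $\beta < \alpha$; by the induction hypothesis these lie in the first argument $R = \bigcup_{\beta<\alpha} F^\beta(\emptyset)$ of $G(R,S)$. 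The unmarked $\ireddown_\nest$ steps, which may have the \emph{same} nesting depth, are handled by the greatest fixed point $S = F(R)$: I would show that the set of pairs derivable by a $\delta$ whose root-level $\ireddown_\nest$ subtrees all have nesting depth $\le \alpha$ is a post-fixed point of $S \mapsto G(R,S)$, and invoke the $\nu$-rule from~\eqref{eq:coind-ind-rules} to place it inside $\gfp{S}{G(R,S)} = F(R)$.

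For the converse ${\ired_{\fp}} \subseteq {\ired_{\nest}}$, I would again induct on $\alpha$, assuming $F^\alpha(\emptyset) \subseteq {\ired_\nest}$ and showing $F^{\alpha+1}(\emptyset) = F(\bigcup_{\beta\le\alpha}F^\beta(\emptyset)) \subseteq {\ired_\nest}$. Writing $R \defd \bigcup_{\beta\le\alpha} F^\beta(\emptyset)$, a pair in $F(R) = \gfp{S}{G(R,S)}$ decomposes, by unfolding the greatest fixed point, as a $(\rstep \cup \down{R})^*$-prefix followed by a single $\down{S}$-step with $S = F(R)$. Here each $\down{R}$-step gives, on the arguments, pairs in $R$ that by the induction hypothesis have $\ired_\nest$-proofs of bounded nesting depth — these furnish the marked $\ireddownfin_\nest$ subderivations in the \rsplit-premise. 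The trailing $\down{S}$-step is coinductive and gives the unmarked $\ireddown_\nest$ suffix; because we may keep unfolding $S = F(R)$ indefinitely this builds a possibly-infinite proof tree, but crucially every ascending path passes through only finitely many \emph{marked} symbols, since each marked symbol's subtree witnesses membership in $R$, whose nesting depths are bounded by $\alpha$. This is exactly the well-foundedness side condition of Definition~\ref{def:ired:restrict}.

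\textbf{The main obstacle} I anticipate is the bookkeeping that couples the two fixed-point layers to the two kinds of down-steps: making fully precise that marked $\ireddownfin_\nest$ matches $\down{R}$ (the inductive, strictly-decreasing layer) whereas unmarked $\ireddown_\nest$ matches $\down{S}$ (the coinductive, same-length layer), and that the \rsplit-rule's premise shape $(\rstep \cup \ireddownfin)^* \relcomp \ireddown$ lines up syntactically with the body $(\rstep \cup \down{R})^* \relcomp \down{S}$. In particular, in the converse direction the delicate point is to argue that no ascending path acquires infinitely many marked symbols: this relies on the fact that each $\down{R}$-component sits \emph{strictly below} the current level in the $\slfp$-approximation, so marked symbols can only be nested to finite depth, which is precisely the content that the $\slfp R$ operator enforces and that the $\alpha$-labeling of Lemma~\ref{lem:nest} records. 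The reflexivity axiom \rid, corresponding to $\id \subseteq \down{R}, \down{S}$, is routine and I would dispatch it by noting $\id$ is a component of both liftings.
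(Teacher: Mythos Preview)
Your proposal is essentially correct and identifies the right correspondence: marked $\ireddownfin_\nest$ matches $\down{R}$ (the inductive layer), unmarked $\ireddown_\nest$ matches $\down{S}$ (the coinductive layer), and the premise of \rsplit\ lines up with the body of $G$. The forward direction ${\ired_\nest}\subseteq{\ired_\fp}$ is in fact the same argument as the paper's: well-founded induction on the nesting depth, using the induction hypothesis to discharge the subderivations under marked symbols and the $\nu$-rule to handle the unmarked tail.

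Where you diverge is in the converse ${\ired_\fp}\subseteq{\ired_\nest}$. You propose to approximate $\lfp{R}{F(R)}$ from below by the iterates $F^\alpha(\emptyset)$ and match them level-by-level against $\ired_{\alpha,\nest}$. The paper instead uses the $\mu$-rule \eqref{eq:coind-ind-rules} directly: it shows once and for all that $F({\ired_\nest})\subseteq{\ired_\nest}$, i.e.\ that $\ired_\nest$ is a pre-fixed point of $F$, and concludes $\ired_\fp\subseteq\ired_\nest$ immediately. This is shorter and avoids a subtlety in your plan: you write $\ired_\fp=\bigcup_\alpha F^\alpha(\emptyset)$ with the union over \emph{all} ordinals, but $\ired_\nest=\bigcup_{\alpha<\omega_1}\ired_{\alpha,\nest}$ with the union bounded by $\omega_1$. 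If you establish $F^\alpha(\emptyset)=\ired_{\alpha,\nest}$ only for $\alpha<\omega_1$, you have not yet shown that the $F$-iteration closes at $\omega_1$, so a priori $\ired_\fp$ could be strictly larger. The fix is easy (extend the correspondence to all ordinals, or observe that your level-by-level equality at $\omega_1$ together with Corollary~\ref{cor:nest:omega1} forces $F(\ired_\nest)=\ired_\nest$), but it should be said explicitly. The paper's use of the $\mu$-rule sidesteps the closure-ordinal question entirely; your approach, in exchange, yields the finer information that the ordinal rank in the $\slfp$-approximation coincides with the nesting depth.
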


\begin{proof}
  We begin with ${\ired_{\fp}} \subseteq {\ired_{\nest}}$.
  Recall that $F(\ired_\nest)$ is the greatest fixed point of $G(\ired_\nest,\_)$, see~\eqref{eq:G:F}.
  Also, we have ${\ireddown_{\nest}} = {\ireddownfin_{\nest}} = \down{\ired_{\nest}\vphantom{i}}$, 
  and hence
  \begin{align}
    F({\ired_{\nest}}) &= (\rstep \cup \mathrel{\down{\ired_{\nest}\vphantom{i}}})^* \relcomp \down{F({\ired_{\nest}})} 
    = (\rstep \cup \mathrel{\ireddownfin_{\nest}})^* \relcomp \down{F({\ired_{\nest}})}\\
    \down{F({\ired_{\nest}})} &= \id \cup \{\,\pair{f(\vec{s})}{\,f(\vec{t})} \mid \vec{s} \,\mathrel{F(\ired_{\nest})}\, \vec{t}\,\}
    \label{eq:id:or:not}
  \end{align}
  where $\vec{s}$, $\vec{t}$ abbreviate $s_1,\ldots,s_n$ and $t_1,\ldots,t_n$, respectively,
  and we write $\vec{s} \mathrel{R} \vec{t}$
  if we have $s_1 \mathrel{R} t_1,\ldots,s_n \mathrel{R} t_n$.
  Employing the $\mu$-rule from~\eqref{eq:coind-ind-rules},
  it suffices to show that $F({\ired_{\nest}}) \subseteq {\ired_{\nest}}$.
  Assume $\pair{s}{t} \in F({\ired_{\nest}})$.
  Then $\pair{s}{t} \in (\rstep \cup \mathrel{\ireddownfin_{\nest}})^* \relcomp \down{F({\ired_{\nest}})}$.
  Then there exists $s'$ such that $s \mathrel{(\rstep \cup \mathrel{\ireddownfin_{\nest}})^*} s'$
  and $s' \mathrel{\down{F({\ired_{\nest}})}} t$.
  Now we distinguish cases according to~\eqref{eq:id:or:not}:
  \begin{align*}
    \infer=[\rsplit] {s \ired t} {s \mathrel{(\rstep \cup \mathrel{\ireddownfin_{\nest}})^*} t & 
      \infer=[\rid] {t \ireddown t } {} }
    &&
    \infer=[\rsplit] {s \ired t} {s \mathrel{(\rstep \cup \mathrel{\ireddownfin_{\nest}})^*} s' & 
      \infer=[\rlift] {s' \ireddown t } {T_1 & \cdots & T_n} }
  \end{align*}
  Here, for $i \in \{1,\ldots,n\}$, $T_i$ is the proof tree for $s_i \ired t_i$
  obtained from $s_i \mathrel{F(\ired_{\nest})} t_i$ by corecursively applying the same procedure.

  Next we show that ${\ired_{\nest}} \subseteq {\ired_{\fp}}$.
  By Corollary~\ref{cor:nest:omega1} %
  it suffices to show ${\ired_{\omega_1,\nest}} \subseteq {\ired_{\fp}}$.
  We prove by well-founded induction on $\alpha \le \omega_1$ that
  ${\ired_{\alpha,\nest}} \subseteq {\ired_{\fp}}$.
  Since $\ired_{\fp}$ is a fixed point of $F$,
  we obtain ${\ired_{\fp}} = F(\ired_{\fp})$, and since $F(\ired_{\fp})$ 
  is a greatest fixed point,
  using the $\nu$-rule from~\eqref{eq:coind-ind-rules},
  it suffices to show that $(*)$ ${\ired_{\alpha,\nest}} \subseteq G(\ired_{\fp},\ired_{\alpha,\nest})$.
  Thus assume that $s \ired_{\alpha,\nest} t$,
  and let $\delta$ be a proof tree of nesting depth $\le \alpha$ deriving $s \ired_{\alpha,\nest} t$.
  The only possibility to derive $s \ired_{\nest} t$ is an application of the \rsplit-rule
  with the premise $s \mathrel{(\rstep \cup \ireddownfin_{\nest})^*} \relcomp \ireddown_{\nest} t$.
  Since $s \ired_{\alpha,\nest} t$, we have
  $s \mathrel{(\rstep \cup \ireddownfin_{\alpha,\nest})^*} \relcomp \ireddown_{\alpha,\nest} t$.
  Let $\tau$ be one of the steps $\ireddownfin_{\alpha,\nest}$ displayed in the premise.
  Let $u$ be the source of $\tau$ and $v$ the target,
  so $\tau : u \ireddownfin_{\alpha,\nest} v$.
  The step $\tau$ is derived either via the \rid-rule or the \rlift-rule.
  The case of the \rid-rule is not interesting since we then can drop $\tau$ from the premise.
  Thus let the step $\tau$ be derived using the \rlift-rule.
  Then the terms $u,v$ are of form $u = f(u_1,\ldots,u_n)$ and $v = f(v_1,\ldots,v_n)$
  and for every $1 \le i \le n$ we have $u_i \ired_{\beta,\nest} v_i$ for some $\beta < \alpha$.
  Thus by induction hypothesis we obtain $u_i \ired_{\fp} v_i$ for every $1 \le i \le n$,
  and consequently $u \mathrel{\down{\ired_{\fp}\vphantom{i}}} v$.
  We then have $s \mathrel{(\rstep \cup \mathrel{\down{\ired_{\fp}\vphantom{i}})^*}} \relcomp \ireddown_{\alpha,\nest} t$,
  and hence $s \mathrel{G(\ired_{\fp},\ired_{\alpha,\nest})} t$.
  This concludes the proof.
\end{proof}

\section{Equivalence with the Standard Definition}\label{sec:equivalence}

In this section we prove the equivalence of the coinductively defined 
infinitary rewrite relations~$\ired$ from
Definitions~\ref{def:ired:restrict} (and~\ref{def:ired:fixedpoint})
with the standard definition
based on ordinal length rewrite sequences with metric and strong convergence at every limit ordinal
(Definition~\ref{def:itrs:standard}).
The crucial observation is the following theorem from~\cite{klop:vrij:2005}:
\begin{theorem}[\normalfont{Theorem 2 of~\cite{klop:vrij:2005}}]\label{thm:finite}
  A transfinite reduction is divergent if and only if for some $n \in \nat$
  there are infinitely many steps at depth $n$.
\end{theorem}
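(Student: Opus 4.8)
The plan is to prove the two implications separately, phrasing the second as its contrapositive, and to reduce everything to two elementary facts: that an infinite set of ordinals below $\alpha$ contains a strictly increasing $\omega$-subsequence whose supremum is a limit ordinal $\lambda \le \alpha$, and that a rewrite step at a position of length exceeding $d$ leaves the term unchanged down to depth $d$. Throughout I write $|\apos_\beta|$ for the depth (length of the position) of the $\beta$-th step.

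For the direction ``infinitely many steps at some depth $\Rightarrow$ divergent'', suppose the set $B = \{\beta < \alpha \mid |\apos_\beta| = n\}$ is infinite for some fixed $n$. I would extract a strictly increasing sequence $\beta_0 < \beta_1 < \cdots$ in $B$ and put $\lambda := \sup_k \beta_k$. Being the supremum of a strictly increasing $\omega$-sequence, $\lambda$ is a limit ordinal with $\lambda \le \alpha$, and the $\beta_k$ are cofinal in $\lambda$. Along this cofinal approach the depths $|\apos_{\beta_k}| = n$ are constant and hence do not tend to infinity, so condition~\ref{item:depth} of Definition~\ref{def:itrs:standard} fails at $\lambda$. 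Consequently the sequence cannot be strongly convergent, i.e. it is divergent, regardless of whether $\lambda$ is internal or equal to $\alpha$.

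For the converse I would argue the contrapositive: assume that for every $n$ there are only finitely many steps at depth $n$, and establish that conditions~\ref{item:distance} and~\ref{item:depth} hold at every limit ordinal $\lambda \le \alpha$. Fix such a $\lambda$ and a bound $d$. The set $S_d$ of indices $\beta$ with $|\apos_\beta| \le d$ is a finite union (over $n \le d$) of finite sets, hence finite; since $\lambda$ is a limit ordinal, $\gamma := \sup(S_d \cap \lambda) + 1$ lies strictly below $\lambda$, and every step $\beta$ with $\gamma \le \beta < \lambda$ satisfies $|\apos_\beta| > d$. This gives condition~\ref{item:depth}. For condition~\ref{item:distance} I would invoke the second elementary fact: once $\beta \ge \gamma$ every subsequent step, and by transfinite induction every intervening limit (using the convergence already guaranteed at internal limits), acts strictly below depth $d$ and so leaves the depth-$d$ truncation fixed; thus all $t_\beta$ with $\gamma \le \beta < \lambda$ agree up to depth $d$. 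Hence $(t_\beta)_{\beta<\lambda}$ is Cauchy, and by completeness of the term metric space $\pair{\iter{\Sigma}{\avars}}{\smetric}$ it converges to a unique limit, which is the required $t_\lambda$ with $\metric{t_\beta}{t_\lambda} \to 0$.

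The main obstacle I anticipate is the ordinal bookkeeping rather than the rewriting content: in the first direction one must correctly extract a cofinal $\omega$-subsequence and certify that its supremum is a genuine limit ordinal $\le \alpha$, and in the second one relies on the fact that a finite set of ordinals below a limit ordinal has supremum strictly below it. The metric part --- that deep steps do not disturb shallow layers, yielding a Cauchy net whose limit exists --- is routine once condition~\ref{item:depth} is in hand, and it is exactly here that completeness of the term space (established in the Remark of Section~\ref{sec:prelims}) is used.
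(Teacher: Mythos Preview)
The paper does not prove this statement at all: it is quoted as Theorem~2 of~\cite{klop:vrij:2005} and used as a black box in the proof of Theorem~\ref{thm:ired:equiv}. There is therefore no paper proof to compare your proposal against.

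Your sketch is the standard argument and is essentially correct. One point of precision in the first direction: by Definition~\ref{def:itrs:standard} a \emph{transfinite rewrite sequence} already satisfies conditions~\ref{item:distance} and~\ref{item:depth} at every internal limit ordinal $\lambda < \alpha$. Hence the limit $\lambda = \sup_k \beta_k$ that you construct cannot lie strictly below $\alpha$ (that would contradict the hypothesis that we have a transfinite rewrite sequence, not merely strong convergence); so necessarily $\lambda = \alpha$, $\alpha$ is a limit, and divergence follows. Your phrase ``regardless of whether $\lambda$ is internal or equal to $\alpha$'' glosses over this, though the conclusion is unaffected. In the second direction your transfinite-induction remark is exactly right: the terms $t_\mu$ at internal limits $\mu \in [\gamma,\alpha)$ are already supplied by Definition~\ref{def:itrs:standard} and, being metric limits of terms sharing the same depth-$d$ prefix, inherit that prefix; the only genuinely new datum to construct is $t_\alpha$, which you obtain from completeness of $\pair{\iter{\Sigma}{\avars}}{\smetric}$.
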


We are now ready to prove the equivalence of both notions:
\begin{theorem}\label{thm:ired:equiv}
  We have ${\ired} = {\iredord}$.
\end{theorem}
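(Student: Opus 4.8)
The goal is to prove ${\ired} = {\iredord}$, equating the coinductive relation (via Definition~\ref{def:ired:restrict}, or equivalently Definition~\ref{def:ired:fixedpoint}) with the classical ordinal-indexed relation of Definition~\ref{def:itrs:standard}. I would prove the two inclusions separately, using the derivation-rule presentation $\ired_\nest$ since its proof trees align naturally with the decomposition of transfinite reductions.

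For the inclusion ${\iredord} \subseteq {\ired}$, the plan is to proceed by transfinite induction on the length $\alpha$ of a strongly convergent reduction $t_0 \redord^\alpha t_\alpha$, building a proof tree in the sense of Definition~\ref{def:ired:restrict}. The key structural fact is Theorem~\ref{thm:finite} (from~\cite{klop:vrij:2005}): a strongly convergent reduction has only finitely many root steps $\rstep$. Hence the reduction factors as ${(\ireddownfin \relcomp \rstep)^*} \relcomp \ireddown$, matching exactly the premise of the $\rsplit$-rule. The segments between and after root steps contain no root activity, so all their steps occur strictly below the root; by grouping them according to the top symbol $f$ I can project each such below-root segment onto its $n$ argument positions, obtaining shorter strongly convergent reductions $s_i \redord^{\beta_i} t_i$ with $\beta_i < \alpha$ (strictly shorter because removing the top layer drops at least the contribution accounted for by the segmentation, and by strong convergence each argument reduction is well-defined at its own limits). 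Applying the induction hypothesis to these argument reductions and feeding them into the $\rlift$-rule produces the subtrees $T_i$, while the $\rid$-rule supplies reflexivity where an argument is unchanged. The finiteness of root steps guarantees the proof tree has no infinite nesting of $\ireddownfin$ along any path, so it is a legal $\ired_\nest$-derivation.

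For the converse ${\ired} \subseteq {\iredord}$, I would use the fixed-point presentation $\ired_\fp$ together with the nesting-depth stratification already developed in the excerpt. By Corollary~\ref{cor:nest:omega1} it suffices to show ${\ired_{\alpha,\nest}} \subseteq {\iredord}$, which I prove by well-founded induction on the nesting depth $\alpha \le \omega_1$. Given a derivation of nesting depth $\le \alpha$, the final $\rsplit$-rule exhibits $s \mathrel{(\rstep \cup \ireddownfin)^*} \relcomp \ireddown t$; each $\ireddownfin$ step sits strictly below the root and, by the induction hypothesis applied to its $\rlift$-subtrees (which have strictly smaller nesting depth), corresponds to a genuine strongly convergent reduction on the arguments. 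Assembling these below-root reductions by contracting inside the appropriate context, and concatenating the finitely many root steps, yields an ordinal-indexed reduction; the final $\ireddown$ step may itself contribute a reduction of length up to the full length, handled coinductively by tracing the activity into the arguments and invoking condition~\ref{item:depth} to ensure the depth of rewriting tends to infinity at each limit. This builds a transfinite sequence satisfying both convergence conditions of Definition~\ref{def:itrs:standard}.

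The main obstacle, I expect, lies in the bookkeeping of convergence at limit ordinals in the ${\iredord} \subseteq {\ired}$ direction: one must verify that the depth-to-infinity condition~\ref{item:depth} of strong convergence is precisely what licenses the projection of a below-root segment onto its argument positions and what makes the resulting argument reductions strongly convergent in their own right. Theorem~\ref{thm:finite} is the essential lever here, since it converts the abstract limit conditions into the concrete finiteness of steps at each depth, and in particular the finiteness of root steps that justifies the $(\rstep \cup \ireddownfin)^*$ shape. The delicate point is the strict decrease of ordinal length needed for the induction: a suffix of a transfinite reduction may have the same length as the whole, which is exactly why $S$ is defined coinductively (greatest fixed point) while $R$ is inductive, and the argument must respect this asymmetry when distributing the reduction across the $\rlift$-rule.
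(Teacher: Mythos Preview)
Your direction ${\ired} \subseteq {\iredord}$ is essentially the paper's argument: induction on nesting depth, followed by an iteration that pushes the residual $\ireddown$ deeper and deeper to produce a strongly convergent concatenation. That part is fine, if sketchy on the limit construction.

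The direction ${\iredord} \subseteq {\ired}$ has a genuine gap. You propose transfinite induction on the length $\alpha$, and you assert that after projecting a below-root segment onto the argument positions you obtain reductions $s_i \redord^{\beta_i} t_i$ with $\beta_i < \alpha$. This is false for the \emph{last} segment. A suffix of a reduction of length $\alpha$ may itself have length $\alpha$ (you even say so in your final paragraph), and projecting a below-root reduction of length $\alpha$ onto an argument can again give a reduction of length $\alpha$: take $f(a) \to f(C(a)) \to f(C^2(a)) \to \cdots$ of length $\omega$; the projected argument reduction $a \to C(a) \to C^2(a) \to \cdots$ also has length $\omega$. So the induction hypothesis is unavailable for the last segment, and the justification ``removing the top layer drops at least the contribution accounted for by the segmentation'' does not hold.

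The paper resolves this by \emph{not} using transfinite induction as the primary engine. Instead it defines the translation $\smktree$ from ordinal reductions to proof trees by guarded \emph{corecursion}: every recursive call is guarded by a constructor of the proof tree, so the tree is produced lazily and no decrease is required. The asymmetry between $\ireddownfin$ and $\ireddown$ is then exactly the point: segments preceding a root step are strict prefixes, hence of length $< \alpha$, and these are the ones assigned the marked symbol $\ireddownfin$; the final segment may have length $\le \alpha$ and is assigned the unmarked $\ireddown$. The condition ``no infinite nesting of $\ireddownfin$'' is verified \emph{after the fact} by observing that each marked symbol is produced with a strictly smaller ordinal parameter, so an infinite $\ireddownfin$-path would yield an infinite descending chain of ordinals. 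In short: corecursion builds the tree, the ordinal only witnesses well-foundedness of the marked part. Your purely inductive scheme cannot handle the unmarked suffix.
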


\begin{proof}
  We write $\iredorddown$ to denote a reduction $\iredord$ without root steps,
  and we write $\redord^\alpha$ and $\redorddown^\alpha$ to indicate the ordinal length $\alpha$.

  We begin with the direction ${\iredord} \subseteq {\ired}$.
  We define a function $\smktree$ (and $\smktreedownmfin$) 
  by guarded corecursion~\cite{coqu:1994},
  mapping rewrite sequences $s \redord^\alpha t$
  (and $s \redorddown^\alpha t$)
  to infinite proof trees derived using the rules from Definition~\ref{def:ired:restrict}.
  This means that every recursive call produces a constructor, contributing to the construction of the infinite tree.
  Note that the arguments of $\smktree$ (and $\smktreedownmfin$)
  are not required to be structurally decreasing.

  We do case distinction on the ordinal $\alpha$.
  If $\alpha = 0$, then $t = s$ and we define %
  \begin{align*}
    \mktree{s \redord^0 s} 
    \;\;=\;\; \infer=[\rsplit] {s \ired s}{\mktreedown{s \redorddown^0 s}}
    &&
    \mktreedownmfin{s \redorddown^0 s} 
    \;\;=\;\; \infer=[\rid] {s \ireddownmfin s}{}
  \end{align*}

  If $\alpha > 0$, then, by Theorem~\ref{thm:finite} the rewrite sequence $s \redord^\alpha t$
  contains only a finite number of root steps.
  As a consequence, it is of the form:
  \begin{align*}
    s = s_0 \sto s_1 \cdots \sto s_{n-1} \sto s_n = t
  \end{align*}
  where for every $i \in \{0,\ldots,n-1\}$, 
  $s_i \sto s_{i+1}$
  is either a root step $s_i \rstep s_{i+1}$,
  or an infinite reduction below the root $s_i \redorddown^{\le \alpha} s_{i+1}$ 
  where $s_i \redorddown^{< \alpha} s_{i+1}$ if $i < n-1$.
  In the latter case, 
  the length of   $s_i \redorddown s_{i+1}$ is smaller than $\alpha$
  because every strict prefix must be shorter than the sequence itself.
  We define
  \begin{align*}
    \mktree{s \redord^\alpha t}
    \;\;=\;\; 
    \infer=[\rsplit] {s \ired t} { T_0 & T_1 & \cdots & T_{n-1} }
  \end{align*}
  where, for $0 \le i < n$,
  \begin{align*}
    T_i = 
    \begin{cases}
      s_i \rstep s_{i+1} & \text{if $s_i \sto s_{i+1}$ is a root step,} \\
      \mktreedownfin{s_i \redorddown^{\beta} s_{i+1}} & \text{if $i < n-1$ and $s_i \redorddown^{\beta} s_{i+1}$ for some $\beta < \alpha$,} \\
      \mktreedown{s_i \redorddown^{\beta} s_{i+1}} & \text{if $i = n-1$ and $s_i \redorddown^{\beta} s_{i+1}$ for some $\beta \le \alpha$.}
    \end{cases}
  \end{align*} 
  
  For rewrite sequences $s \redorddown^\alpha t$ with $\alpha > 0$
  we have that $s = f(s_1,\ldots,s_n)$ and $t = f(t_1,\ldots,t_n)$ for some $f \in \Sigma$ of arity~$n$
  and terms $s_1,\ldots,s_n,t_1,\ldots,t_n \in \iter{\Sigma}{\avars}$,
  and there is a rewrite sequence $s_i \redord^{\le \alpha} t_i$ for every $i$ with $1 \le i \le n$. 
  We define the two rules:
  \begin{align*}
    \mktreedownmfin{s \redorddown^\alpha t}
    \;\;=\;\;
    \infer=[\rlift] {s \ireddownmfin t}{\mktree{s_1 \redord^{\le \alpha} t_1} & \cdots & \mktree{s_n \redord^{\le \alpha} t_n}}
  \end{align*}
  
  The obtained proof tree $\mktree{s \redord^\alpha t}$ derives $s \ired t$.
  To see that the requirement that there is no ascending path through this tree 
  containing an infinite 
  number of symbols $\ireddownfin$ is fulfilled,
  we note the following.
  The symbol $\ireddownfin$ is produced by $\mktreedownfin{s \redorddown^\beta t}$
  which is invoked in $\mktree{s \redord^\alpha t}$ for a $\beta$ that is strictly smaller than $\alpha$.
  By well-foundedness of $<$ on ordinals, no such path exists.

  We now show ${\ired} \subseteq {\iredord}$.
  We prove by well-founded induction on $\alpha \le \omega_1$ that
  ${\ired_{\alpha}} \subseteq {\iredord}$.
  This suffices since ${\ired} = {\ired_{\omega_1}}$.
  Let $\alpha \le \omega_1$ and assume that $s \ired_{\alpha} t$.
  Let $\delta$ be a proof tree of nesting depth $< \alpha$ witnessing $s \ired_{\alpha} t$.
  The only possibility to derive $s \ired t$ is an application of the \rsplit-rule
  with the premise $s \mathrel{(\rstep \cup \ireddownfin)^*} \relcomp \ireddown t$.
  Since $s \ired_{\alpha} t$, we have
  $s \mathrel{(\rstep \cup \ireddownfin_{\alpha})^*} \relcomp \ireddown_{\alpha} t$.
  By induction hypothesis we have 
  $s \mathrel{(\rstep \cup \iredord)^*} \relcomp \ireddown_{\alpha} t$,
  and thus
  $s \iredord \relcomp \ireddown_{\alpha} t$.
  We have ${\ireddown_{\alpha}} = {\down{\ired_{\alpha}\vphantom{i}}}$, and consequently
  $s \iredord s_1 \mathrel{\down{\ired_{\alpha}\vphantom{i}}} t$ for some term $s_1$.  
  Repeating this argument on
  $s_1 \mathrel{\down{\ired_{\alpha}\vphantom{i}}} t$, we get 
  $s \iredord s_1 \mathrel{\down{\iredord\vphantom{i}}} s_2 \mathrel{\down{\down{\ired_{\alpha}\vphantom{i}}}} t$.  
  After $n$ iterations, we obtain
  \begin{align*}
    s \iredord s_1 
    \mathrel{\down{\iredord\vphantom{i}}} s_2 
    \mathrel{\down{\down{\iredord\vphantom{i}}}} s_3
    \mathrel{\down{\down{\down{\iredord\vphantom{i}}}}} s_4
    \cdots \mathrel{({\ired_{\alpha}})^{-(n-1)}} s_n
    \mathrel{({\ired_{\alpha}})^{-n}} t
  \end{align*}
  where $({\ired_{\alpha}})^{-n}$
  denotes the $n$th iteration of $x \mapsto \down{x}$ on $\ired_{\alpha}$.
  
  Clearly, the limit of $\{s_n\}$ is $t$.  Furthermore, each of the reductions $s_n \iredord s_{n+1}$ 
  are strongly convergent and take place at depth greater than or equal to $n$.
  Thus, the infinite concatenation of these reductions yields a strongly convergent reduction from $s$ to $t$
  (there is only a finite number of rewrite steps at every depth $n$).
\end{proof}

\section{Infinitary Equational Reasoning and Bi-Infinite Rewriting}\label{sec:ieq}

\subsection{Infinitary Equational Reasoning}%

\begin{samepage}
\begin{definition}\label{def:ieq:rules}
  Let $\aes$ be a TRS over $\Sigma$, and let $T = \iter{\Sigma}{\avars}$.
  We define \emph{infinitary equational reasoning} as the relation~%
  ${=^\infty} \subseteq T \times T$
  by the %
  mutually coinductive rules:
  \begin{align*}
    \infer=
    {s \ieq t}
    {s \mathrel{(\leftarrow_{\varepsilon} \cup \rstep \cup \ieqdown)^*} t}
    &&  
    \infer=
    {f(s_1,s_2,\ldots,s_n) \ieqdown f(t_1,t_2,\ldots,t_n)}
    {s_1 \ieq t_1 & \cdots & s_n \ieq t_n}
  \end{align*}
  where ${\ieqdown} \subseteq T \times T$ stands for infinitary equational reasoning below the root.
\end{definition}
\end{samepage}

Note that, in comparison with the rules~\eqref{rules:intro:ieq} for $\ieq$ from the introduction, 
we now have used $\leftarrow_\varepsilon \cup \rstep$ instead of $=_\aes$.
It is not difficult to see that this gives rise to the same relation.
The reason is that we can `push' non-root rewriting steps $=_\aes$ 
into the arguments of~$\ieqdown$.

\begin{example}\label{ex:ieq}
  Let $\aes$ be a TRS consisting of the following rules:
  \begin{align*}
    \fun{a} & \to \fun{f}(\fun{a})  &
    \fun{b} & \to \fun{f}(\fun{b}) &
    \fun{C}(\fun{b}) & \to \fun{C}(\fun{C}(\fun{a})) \,.
  \end{align*}
  Then we have $\fun{a} \ieq \fun{b}$ as derived in Figure~\ref{fig:ieq:fagb} (top),
  and $\fun{C}(\fun{a}) \ieq \fun{C}^\omega$ as in Figure~\ref{fig:ieq:fagb} (bottom).
  \begin{figure}[ht]
    \begin{gather*}
    \infer=
    { \fun{a} \ieq \fun{b} }
    {
      \fun{a} \to_{\varepsilon} \fun{f}(\fun{a})
      &
      \infer=
      {\fun{f}(\fun{a}) \ieqdown \fun{f}^\omega}
      {
        \infer=
        {\fun{a} \ieq \fun{f}^\omega}
        {
          \fun{a} \to_{\varepsilon} \fun{f}(\fun{a})
          &
          \infer=
          {\fun{f}(\fun{a}) \ieqdown \fun{f}^\omega}
          {{\makebox(0,0){
                \hspace{13mm}\begin{tikzpicture}[baseline=11.5ex]
                \draw [->,thick,dotted] (-4mm,0) -- (-4mm,1mm) to[out=90,in=100] (8mm,-1mm) to[out=-80,in=-20,looseness=1.4] (-6mm,-13mm);
                \end{tikzpicture}
              }}}
        }
      }
      &&&&
      \infer=
      {\fun{f}^\omega \ieqdown \fun{f}(\fun{b})} 
      {
        \infer=
        {\fun{f}^\omega \ieq \fun{b}}
        {
          \infer=
          {\fun{f}^\omega \ieqdown \fun{f}(\fun{b})}
          {{\makebox(0,0){
                \hspace{-13mm}\begin{tikzpicture}[baseline=11.5ex]
                \draw [->,thick,dotted] (4mm,0) -- (4mm,1mm) to[out=90,in=80] (-8mm,-1mm) to[out=-100,in=-160,looseness=1.6] (6mm,-13mm);
                \end{tikzpicture}
              }}
          }
          &
          \fun{f}(\fun{b}) \leftarrow_{\varepsilon} \fun{b}
        }
      }
      &
      \fun{f}(\fun{b}) \leftarrow_{\varepsilon} \fun{b}
    }
    \\[2ex] %
    \infer=
    {\fun{C}(\fun{a}) \ieq \fun{C}^\omega}
    {
      \infer=
      {\fun{C}(\fun{a}) \ieqdown \fun{C}(\fun{b})} 
      {\infer={\fun{a} \ieq \fun{b}}{\text{(as above)}}}
      &
      \fun{C}(\fun{b}) \rstep \fun{C}(\fun{C}(\fun{a}))
      &
      \infer=
      {\fun{C}(\fun{C}(\fun{a})) \ieqdown \fun{C}^\omega}
      {\infer={\fun{C}(\fun{a}) \ieq \fun{C}^\omega}
            {\makebox(0,0){
              \hspace{18mm}\begin{tikzpicture}[baseline=13ex]
              \draw [->,thick,dotted] (0,0) -- (0,1mm) to[out=90,in=100] (15mm,-1mm) to[out=-80,in=-25,looseness=1.4] (-0mm,-14mm);
              \end{tikzpicture}
            }}
      }
    }
    \end{gather*}\vspace{-5ex}
    \caption{An example of infinitary equational reasoning, 
      deriving $\fun{C}(\fun{a}) \ieq \fun{C}^\omega$ in the TRS~$\aes$ of Example~\ref{ex:ieq}.
      Recall Notation~\ref{not:transitivity}.}
    \label{fig:ieq:fagb}
  \end{figure}
\end{example}

Definition~\ref{def:ieq:rules} of \,$\ieq$\, can also be defined using 
a greatest fixed point as follows: 
\begin{align*}
  {\ieq} \;\;\defd\;\; \gfp{R}{(\leftarrow_{\varepsilon} \cup \rstep \cup \mathrel{\down{R}})^*}  \,,
\end{align*}
where $\down{R}$ was defined in Definition~\ref{def:lifting}.
The equivalence of these definitions 
can be established in a similar way as in Theorem~\ref{thm:equiv:der:fp}.
It is easy to verify that the function 
$R \mapsto (\leftarrow_{\varepsilon} \cup \rstep \cup \mathrel{\down{R}})^*$
is monotone, and consequently the greatest fixed point exists. 

We note that, in the presence of collapsing rules (i.e., rules $\ell \to r$ where $r \in \avars$),
everything becomes equivalent: $s \ieq t$ for all terms $s,t$. 
For example, having a rule $\tail(x) \to x$ we obtain that
$s \ieq \tail(s) \ieq \tail^2(s) \ieq \cdots \ieq \tail^\omega$ for every term $s$.
This can be overcome by forbidding certain infinite terms and certain infinite limits.

\subsection{Bi-Infinite Rewriting}%

Another notion that arises naturally in our setup
is that of bi-infinite rewriting,
allowing rewrite sequences to extend infinitely forwards and backwards.
We emphasize that each of the steps $\rstep$ 
in such sequences is a forward step.

\begin{definition}\label{def:ibi:rules}
  Let $\atrs$ be a term rewriting system over $\Sigma$,
  and let $T= \iter{\Sigma}{\avars}$.
  We define the \emph{bi-infinite rewrite relation ${\ibi} \subseteq T \times T$} 
  by the following coinductive rules
  \begin{align*}
    \infer=
    {s \ibi t}
    {s \mathrel{(\rstep \cup \ibidown)^*} t}
    &&  
    \infer=
    {f(s_1,s_2,\ldots,s_n) \ibidown f(t_1,t_2,\ldots,t_n)}
    {s_1 \ibi t_1 & \cdots & s_n \ibi t_n}
  \end{align*}
  where ${\ibidown} \subseteq T \times T$ stands for bi-infinite rewriting below the root.
\end{definition}

If we replace $\ieq$ and $\ired$ by $\ibi$, and $\ieqdown$ and $\ireddown$ by $\ibidown$,
then Examples~\ref{ex:Ca:a} and~\ref{ex:fab} %
are illustrations of this rewrite relation.

Again, like $\ieq$, the relation $\ibi$\, can also be defined using 
a greatest fixed point: %
\begin{align*}
  {\ibi} &\;\;\defd\;\; \gfp{R}{(\rstep \cup \mathrel{\down{R}})^*} \,.
\end{align*}
Monotonicity of
$R \mapsto (\rstep \cup \mathrel{\down{R}})^*$
is easily verified, so that the greatest fixed point exists. 
Also, the equivalence of Definition~\ref{def:ibi:rules} with this $\sgfp$-definition 
can be established similarly. %

\section{Relating the Notions}\label{sec:venn}

\begin{lemma}
  Each of the relations $\ired$, $\ibi$ and $\ieq$ is reflexive and transitive.
  The relation $\ieq$ is also symmetric.
\end{lemma}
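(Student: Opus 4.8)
The plan is to exploit that, at their greatest fixed points, $\ieq$ and $\ibi$ are \emph{literally} reflexive--transitive closures, so that reflexivity and transitivity come for free, while symmetry of $\ieq$ needs one short coinductive step and both properties for $\ired$ are handled via the ordinal equivalence of Theorem~\ref{thm:ired:equiv}.

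First I would dispatch reflexivity and transitivity of $\ieq$ and $\ibi$ simultaneously. Writing $A \defd {\leftarrow_\varepsilon} \cup {\rstep} \cup \down{\ieq}$, the fixed-point equation reads ${\ieq} = A^*$. Since every reflexive--transitive closure satisfies $\id \subseteq A^*$ and the idempotence $A^* \relcomp A^* = A^*$, the relation $\ieq$ is reflexive and transitive with no further argument; the identical reasoning with $A \defd {\rstep} \cup \down{\ibi}$ settles $\ibi$.

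For symmetry of $\ieq$ I would show that ${\ieq}^{-1}$ is itself a fixed point of $F(R) \defd ({\leftarrow_\varepsilon} \cup {\rstep} \cup \down{R})^*$. The key is that $F$ commutes with taking inverses: lifting satisfies $\down{R^{-1}} = (\down{R})^{-1}$, reflexive--transitive closure satisfies $(A^*)^{-1} = (A^{-1})^*$, and the base relation ${\leftarrow_\varepsilon} \cup {\rstep}$ is symmetric; chaining these gives $F(R^{-1}) = F(R)^{-1}$ for every $R$. Applying this at the fixed point yields ${\ieq}^{-1} = F(\ieq)^{-1} = F({\ieq}^{-1})$, so ${\ieq}^{-1}$ is (in particular) a post-fixed point, and the $\nu$-rule from~\eqref{eq:coind-ind-rules} delivers ${\ieq}^{-1} \subseteq {\ieq}$, i.e.\ symmetry.

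It remains to treat $\ired$. Reflexivity is again almost immediate: unfolding both fixed points gives ${\ired} = ({\rstep} \cup \down{\ired})^* \relcomp \down{\ired}$, and since both factors contain $\id$, so does their composite, whence $\id \subseteq {\ired}$. Transitivity is the genuine obstacle and does \emph{not} collapse at the fixed-point level, because the trailing below-root factor $\down{\ired}$ of the first reduction, followed by root steps opening the second, cannot be absorbed into a single $({\rstep} \cup \down{\ired})^*$. I would therefore pass to the ordinal side via ${\ired} = {\iredord}$ (Theorem~\ref{thm:ired:equiv}): given $s \redord^{\alpha} t$ and $t \redord^{\beta} u$, their concatenation is a sequence of length $\alpha + \beta$ from $s$ to $u$, and the only thing to verify is strong convergence at each limit ordinal. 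Limits below $\alpha$ and those of the form $\alpha + \lambda$ inherit the distance-- and depth--conditions from the two given sequences, and the junction at $\alpha$ (when $\alpha$ is a limit and $\beta > 0$) is covered by strong convergence of the first sequence to $t = t_\alpha$. Hence $s \iredord u$, giving transitivity of $\ired$. The only delicate point in the whole proof is precisely this junction bookkeeping, which is exactly what the detour through the metric definition trivializes.
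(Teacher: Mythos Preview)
Your handling of $\ieq$ and $\ibi$ coincides with the paper's one-line argument: at the fixed point each relation is literally a reflexive--transitive closure, so reflexivity and transitivity are immediate. Your symmetry argument for $\ieq$ via $F(R^{-1}) = F(R)^{-1}$ is more explicit than what the paper offers (the paper's proof sentence does not single out symmetry at all), and it is correct.

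Where you diverge is transitivity of $\ired$: you assert that it ``does not collapse at the fixed-point level'' and route through Theorem~\ref{thm:ired:equiv}. That assertion is wrong. You already have the unfolded equation ${\ired} = ({\rstep} \cup \down{\ired})^* \relcomp \down{\ired}$, and $\down{\ired}$ \emph{is} one of the disjuncts inside the starred relation, so the absorption you say fails in fact succeeds:
\begin{align*}
  {\ired} \relcomp {\ired}
  &= ({\rstep} \cup \down{\ired})^* \relcomp \down{\ired} \relcomp ({\rstep} \cup \down{\ired})^* \relcomp \down{\ired} \\
  &\subseteq ({\rstep} \cup \down{\ired})^* \relcomp ({\rstep} \cup \down{\ired})^* \relcomp ({\rstep} \cup \down{\ired})^* \relcomp \down{\ired}
  = ({\rstep} \cup \down{\ired})^* \relcomp \down{\ired}
  = {\ired}\,.
\end{align*}
At the derivation-tree level this is just: concatenate the two \rsplit-premises, relabeling the trailing $\ireddown$ of the first as $\ireddownfin$; each ascending path through that subtree gains exactly one marked symbol, so the no-infinite-nesting constraint is preserved. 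This is precisely what the paper's terse ``first rule uses reflexive--transitive closure'' is gesturing at. Your ordinal argument is not incorrect, but it makes this elementary lemma depend on the heavy equivalence of Theorem~\ref{thm:ired:equiv}, which is both unnecessary and, in a presentation whose point is to replace ordinals by (co)induction, against the grain.
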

\begin{proof}
  Follows immediately from
  the fact that the relations are defined using the reflexive-transitive closure in each of their first rules. 
\end{proof}

\newcommand{\ibii}{\mathrel{\reflectbox{$\ibi$}}}
\begin{theorem}
  For every TRS $\atrs$ we have the following inclusions:
  \begin{center}
    \vspace{-1ex}
    \begin{tikzpicture}[node distance=20mm]
      \node (ired) {$\ired$};
      \node (ibi) [right of=ired] {$\ibi$};
      \node (iredconv) [right of=ired,yshift=-6mm] {$({\iredi} \cup {\ired})^*$};
      \node (ibiconv) [right of=ibi,node distance=25mm] {$({\ibii} \cup {\ibi})^*$};
      \node (ieq) [right of=ibiconv,node distance=25mm] {$\ieq$};
      
      \node at ($(ired)!.5!(ibi)$) {$\subseteq$};
      \node at ($(ibi)!.4!(ibiconv)$) {$\subseteq$};
      \node at ($(ired)!.6!(iredconv.west)$) [rotate=-30] {$\subseteq$};
      \node at ($(iredconv.east)!.4!(ibiconv.west)$) [rotate=20] {$\subseteq$};
      \node at ($(ibiconv)!.6!(ieq)$) {$\subseteq$};
    \end{tikzpicture}
    \vspace{-1ex}
  \end{center}
  Moreover, for each of these inclusions there exists a TRS for which the inclusion is strict. 
\end{theorem}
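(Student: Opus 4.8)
The plan is to reduce the whole diagram to two genuinely rewriting-theoretic facts, namely $\ired \subseteq \ibi$ and $\ibi \subseteq \ieq$, and to obtain every other arrow by purely lattice-theoretic manipulation. For $\ired \subseteq \ibi$ I would use the fixed-point presentations: writing $F(R) = \gfp{S}{(\rstep \cup \down{R})^* \relcomp \down{S}}$, Definition~\ref{def:ired:fixedpoint} reads ${\ired} = \lfp{R}{F(R)}$, whereas (as observed in the introduction, by replacing $\slfp$ with $\sgfp$ and using $\id \subseteq \down{S}$) ${\ibi} = \gfp{R}{F(R)}$ for the same monotone $F$. Since the least fixed point of a monotone operator is contained in its greatest fixed point, $\ired \subseteq \ibi$ is immediate; equivalently, one checks that every $\ired$-proof tree of Definition~\ref{def:ired:restrict} becomes an $\ibi$-proof tree by erasing the marks on $\ireddownfin$ and forgetting the nesting restriction. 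For $\ibi \subseteq \ieq$ I would compare the two operators $G_1(R) = (\rstep \cup \down{R})^*$ and $G_2(R) = (\leftarrow_{\varepsilon} \cup \rstep \cup \down{R})^*$: pointwise $G_1(R) \subseteq G_2(R)$, so by monotonicity of the greatest-fixed-point constructor ${\ibi} = \gfp{R}{G_1(R)} \subseteq \gfp{R}{G_2(R)} = {\ieq}$.

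Given these two inclusions, the remaining arrows follow formally. The inclusions $\ired \subseteq ({\iredi} \cup {\ired})^*$ and $\ibi \subseteq ({\ibii} \cup {\ibi})^*$ hold because the right-hand sides are reflexive--transitive closures of relations that already contain $\ired$, respectively $\ibi$, as a single step. From $\ired \subseteq \ibi$ I get $\iredi \subseteq \ibii$ by reversing, hence ${\iredi} \cup {\ired} \subseteq {\ibii} \cup {\ibi}$, and applying the monotone operator $(\cdot)^*$ yields $({\iredi} \cup {\ired})^* \subseteq ({\ibii} \cup {\ibi})^*$. Finally, by the preceding lemma $\ieq$ is reflexive, transitive and symmetric; from $\ibi \subseteq \ieq$ and symmetry I obtain $\ibii \subseteq \ieq$, so ${\ibii} \cup {\ibi} \subseteq \ieq$, and reflexivity and transitivity make $\ieq$ closed under $(\cdot)^*$, giving $({\ibii} \cup {\ibi})^* \subseteq \ieq$.

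For strictness I would exhibit small systems for each arrow; four of the five are easy. Taking $\atrs = \{\,\fun{C}(\fun{a}) \to \fun{a}\,\}$ we have the backward-infinite sequence $\fun{C}^\omega \ibi \fun{a}$ but $\fun{C}^\omega \not\ired \fun{a}$ (as noted in the introduction), separating $\ired$ from $\ibi$. Since $({\iredi} \cup {\ired})^*$ and $({\ibii} \cup {\ibi})^*$ are symmetric while $\ired$ and $\ibi$ are not, the single-rule system $\{\,\fun{a} \to \fun{b}\,\}$ (with $\fun{b}$ a normal form, so $\fun{b} \not\ired \fun{a}$ and $\fun{b} \not\ibi \fun{a}$) separates $\ired$ from $({\iredi} \cup {\ired})^*$ and $\ibi$ from $({\ibii} \cup {\ibi})^*$. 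For $({\iredi} \cup {\ired})^* \subsetneq ({\ibii} \cup {\ibi})^*$ I again use $\{\,\fun{C}(\fun{a}) \to \fun{a}\,\}$: here $\fun{C}^\omega \mathrel{({\ibii} \cup {\ibi})^*} \fun{a}$, whereas $\fun{C}^\omega$ is isolated under $({\iredi} \cup {\ired})^*$, because the rule never creates a $\fun{C}$ and never destroys an $\fun{a}$, so the only term $\ired$-related to the $\fun{a}$-free normal form $\fun{C}^\omega$ is $\fun{C}^\omega$ itself; hence $\fun{C}^\omega$ and $\fun{a}$ lie in distinct classes of the convertibility relation $({\iredi} \cup {\ired})^*$.

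The one subtle case, and the step I expect to be the main obstacle, is the strictness of $({\ibii} \cup {\ibi})^* \subsetneq \ieq$. The conceptual point is that in a single factor $\ibi$ (respectively $\ibii$) the direction of the root steps is uniform across all depths of the coinductive derivation, so a finite composition $({\ibii} \cup {\ibi})^*$ can change the root-step direction only a bounded number of times, roughly once per factor, when read along the spine of a term; whereas a single $\ieq$-derivation may alternate the direction of its root steps at infinitely many successive depths. I would therefore design a TRS in which connecting two chosen terms forces a forward root step at even depths and a backward root step at odd depths along the whole spine, with the subterm languages arranged so that at each depth no all-forward or all-backward segment can bridge the gap. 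The positive part is then a single infinitely-alternating coinductive $\ieq$-derivation, analogous to Figure~\ref{fig:ieq:fagb}. The hard part is the negative part: I would prove by an invariant argument that any derivation of the pair in $({\ibii} \cup {\ibi})^*$ factors through finitely many intermediate terms, extract from the bounded number of factors a bound on the number of depth-wise direction changes realisable along the spine, and derive a contradiction with the unbounded alternation forced by the construction.
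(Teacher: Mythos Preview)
Your treatment of the inclusions is correct and matches the paper: the paper also derives $\ired \subseteq \ibi \subseteq \ieq$ essentially from the fixed-point presentations (it refers back to the introduction, where exactly your $\slfp$-versus-$\sgfp$ and $G_1 \le G_2$ observations are made), and obtains the remaining three arrows by the same monotonicity and closure reasoning you give. Your strictness witnesses for $\ired \subsetneq \ibi$, for $\ired \subsetneq ({\iredi}\cup{\ired})^*$, for $\ibi \subsetneq ({\ibii}\cup{\ibi})^*$, and for $({\iredi}\cup{\ired})^* \subsetneq ({\ibii}\cup{\ibi})^*$ are also the paper's (Example~\ref{ex:Ca:a} for the first and last, and ``$\ibi$ is not symmetric'' for the middle two).

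The one place you diverge is the strictness of $({\ibii}\cup{\ibi})^* \subsetneq {\ieq}$, and there you are making your life harder than necessary. You correctly identify the mechanism (a finite $({\ibii}\cup{\ibi})^*$-composition can switch direction only boundedly often, while a single $\ieq$-derivation can alternate at every depth), and you even point to Figure~\ref{fig:ieq:fagb} as the template for the positive part --- but that figure \emph{is} the witness. The paper simply uses Example~\ref{ex:ieq}: there $\fun{C}(\fun{a}) \ieq \fun{C}^\omega$, and any putative derivation in $({\ibii}\cup{\ibi})^*$ would have to pass through
\[
\fun{C}(\fun{a}) \ibi \fun{C}(\fun{f}^\omega) \ibiinv \fun{C}(\fun{b}) \to \fun{C}(\fun{C}(\fun{a})) \ibi \fun{C}(\fun{C}(\fun{f}^\omega)) \ibiinv \fun{C}(\fun{C}(\fun{b})) \to \cdots
\]
forcing an unbounded number of direction changes along the $\fun{C}$-spine. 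So rather than designing a new TRS from scratch, just commit to Example~\ref{ex:ieq}; the invariant you sketch (a bound on direction changes coming from the number of factors) is exactly what the paper appeals to, and at the same informal level. Your proposed ``alternating root steps at even/odd depths'' construction would also work, but it is more effort for no gain.
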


\begin{proof}
  The inclusions ${\ired} \subsetneq {\ibi} \subsetneq {\ieq}$ have already been established in the introduction.
  The inclusion ${\ired} \subsetneq {({\iredi} \cup {\ired})^*}$ is well-known (and obvious).
  Also ${\ibi} \subsetneq {({\ibii} \cup {\ibi})^*}$ is immediate since $\ibi$ is not symmetric.
  
  The inclusion ${({\iredi} \cup {\ired})^*} \subseteq {({\ibii} \cup {\ibi})^*}$
  is immediate since ${\ired} \subseteq {\ibi}$.
  Example~\ref{ex:Ca:a} witnesses strictness of this inclusion.
  The reason is that, for this example, ${\ired} = {\to^*}$ as the system does not admit any forward limits.
  Hence ${({\iredi} \cup {\ired})^*}$ is just finite conversion on potentially infinite terms.
  Thus $\fun{C}^\omega \ibi \fun{a}$, but not $\fun{C}^\omega \mathrel{({\iredi} \cup {\ired})^*} \fun{a}$.

  The inclusion ${({\ibii} \cup {\ibi})^*} \subseteq {\ieq}$
  follows from the fact that $\ieq$ includes $\ibi$ and is symmetric and transitive.
  Example~\ref{ex:ieq} witnesses strictness:
  $\fun{C}(\fun{a}) = \fun{C}^\omega$ can only be derived by
  a rewrite sequence of the form:
  \begin{align*}
    \fun{C}(\fun{a}) \ibi \fun{C}(\fun{f}^\omega) \ibiinv \fun{C}(\fun{b}) \to \fun{C}(\fun{C}(\fun{a}))
    \ibi \fun{C}(\fun{C}(\fun{f}^\omega)) \ibiinv \fun{C}(\fun{C}(\fun{b})) \to \fun{C}(\fun{C}(\fun{C}(\fun{a}))) \ibi \cdots
  \end{align*}
  and hence we need to change rewriting directions infinitely often  %
  whereas ${({\ibii} \cup {\ibi})^*}$ allows to change the direction only a finite number of times.
\end{proof}

\newcommand{\ieqclose}[1]{\stackrel{\infty}{T}(#1)}
\begin{lemma}\label{lem:ieq:closed}
  For relations $S \subseteq \iter{\Sigma}{\avars} \times \iter{\Sigma}{\avars}$ we define 
  \begin{align*}
    \ieqclose{S} \;\;\defd\;\; \gfp{R}{(S^{-1} \cup S \cup {\mathrel{\down{R}}})^*}\;.
  \end{align*}
  We have ${\ieqclose{S}} = {\ieqclose{\ieqclose{S}}}$ for every $S \subseteq \iter{\Sigma}{\avars} \times \iter{\Sigma}{\avars}$.
\end{lemma}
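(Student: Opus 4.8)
The statement asserts that the infinitary equational closure operator $\ieqclose{(\cdot)}$ is idempotent. Writing $\Phi_S(R) \defd (S^{-1} \cup S \cup \down{R})^*$ for the underlying operator, so that $\ieqclose{S} = \gfp{R}{\Phi_S(R)}$, and abbreviating $E \defd \ieqclose{S}$, the goal becomes $\ieqclose{E} = E$, i.e.\ $\ieqclose{\ieqclose{S}} = \ieqclose{S}$. The plan is to first record that $E$ is symmetric, and then establish the two inclusions $E \subseteq \ieqclose{E}$ and $\ieqclose{E} \subseteq E$ separately, each via the $\nu$-rule from~\eqref{eq:coind-ind-rules}.

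For symmetry I would observe that $\Phi_S$ commutes with relational inverse: since $S^{-1} \cup S$ is symmetric, since $\down{(R^{-1})} = (\down{R})^{-1}$ (the lifting commutes with inverse), and since $(\,\cdot\,)^*$ commutes with inverse, one computes $\Phi_S(R^{-1}) = \Phi_S(R)^{-1}$. As $E$ is a fixed point of $\Phi_S$, this gives $\Phi_S(E^{-1}) = \Phi_S(E)^{-1} = E^{-1}$, so $E^{-1}$ is also a fixed point; because $E$ is the \emph{greatest} fixed point, $E^{-1} \subseteq E$, and applying inverse yields $E = E^{-1}$. The identical argument shows $\ieqclose{E}$ is symmetric as well.

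The inclusion $E \subseteq \ieqclose{E}$ is immediate: by the $\nu$-rule it suffices to check $E \subseteq \Phi_E(E) = (E^{-1} \cup E \cup \down{E})^*$, which holds because $E$ is one of the relations in the union. This direction does not even use symmetry; moreover, combined with monotonicity of the lifting it yields $\down{E} \subseteq \down{\ieqclose{E}}$, a containment I will reuse below.

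The harder inclusion $\ieqclose{E} \subseteq E$ is where the real work lies. By the $\nu$-rule applied to the defining operator $\Phi_S$ of $E$, it suffices to show $\ieqclose{E} \subseteq \Phi_S(\ieqclose{E}) = (S^{-1} \cup S \cup \down{\ieqclose{E}})^*$. So I take a pair $(a,b) \in \ieqclose{E}$; since $\ieqclose{E}$ is a fixed point of $\Phi_E$, it is witnessed by a \emph{finite} chain of steps each lying in $E^{-1} \cup E \cup \down{\ieqclose{E}}$, which by symmetry of $E$ are all in $E \cup \down{\ieqclose{E}}$. I would then expand every $E$-step, using the fixed-point equation $E = (S^{-1} \cup S \cup \down{E})^*$, into a finite chain of $S^{-1}$-, $S$- and $\down{E}$-steps. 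Concatenating, $(a,b)$ is connected by a single finite chain whose steps lie in $S^{-1} \cup S \cup \down{E} \cup \down{\ieqclose{E}}$; since $\down{E} \subseteq \down{\ieqclose{E}}$, all steps lie in $S^{-1} \cup S \cup \down{\ieqclose{E}}$, so $(a,b) \in \Phi_S(\ieqclose{E})$, as required. The main obstacle is exactly this flattening: a two-level equational derivation must be collapsed into a single level, and what makes it go through is the symmetry of $E$ (handling forward and backward inner steps uniformly), the containment $\down{E} \subseteq \down{\ieqclose{E}}$ from the easy inclusion, and the finiteness of each $(\,\cdot\,)^*$-chain, which keeps the concatenation inside $(\,\cdot\,)^*$.
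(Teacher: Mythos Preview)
Your proof is correct and follows essentially the same route as the paper: both establish $E \subseteq \ieqclose{E}$ trivially and then show $\ieqclose{E}$ is a (post-)fixed point of $\Phi_S$ by unfolding both fixed-point equations, using symmetry of $E$, collapsing the nested closure, and absorbing $\down{E}$ into $\down{\ieqclose{E}}$. The only cosmetic differences are that the paper writes the hard inclusion as a chain of relational equalities (using $(A^* \cup B)^* = (A \cup B)^*$) rather than arguing element-wise, and that the paper simply asserts symmetry of $\ieqclose{S}$ where you prove it.
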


\begin{proof}
  For every relation $S$ we have
  $S \subseteq (S^{-1} \cup S \cup {\mathrel{\down{R}}})^*$
  and hence $S \subseteq {\ieqclose{S}}$ 
  by \eqref{eq:coind-ind-rules}.
  Hence it follows that ${\ieqclose{S}} \subseteq {\ieqclose{\ieqclose{S}}}$.
  For ${\ieqclose{\ieqclose{S}}} \subseteq {\ieqclose{S}}$ we note that
  \begin{align*}
    {\ieqclose{\ieqclose{S}}} 
    &= {(\;{\ieqclose{S}^{-1}} \cup {\ieqclose{S}} \cup {\down{\ieqclose{\ieqclose{S}}}}\;)^*} && \text{by definition}\\
    &= {(\;{\ieqclose{S}} \cup {\down{\ieqclose{\ieqclose{S}}}}\;)^*} && \text{by symmetry of $\ieqclose{S}$} \\
    &= {(\;(\;{S^{-1}} \cup S \cup {\down{\ieqclose{S}}}\;)^* \cup {\down{\ieqclose{\ieqclose{S}}}}\;)^*} && \text{by definition} \\
    &= {(\;{S^{-1}} \cup S \cup {\down{\ieqclose{S}}} \cup {\down{\ieqclose{\ieqclose{S}}}}\;)^*}  \\
    &= {(\;{S^{-1}} \cup S \cup {\down{\ieqclose{\ieqclose{S}}}}\;)^*} && \text{since ${\ieqclose{S}} \subseteq {\ieqclose{\ieqclose{S}}}$} 
  \end{align*}
  Thus ${\ieqclose{\ieqclose{S}}}$ is a fixed point of $R \mapsto (S^{-1} \cup S \cup {\mathrel{\down{R}}})^*$,
  and hence ${\ieqclose{\ieqclose{S}}} \subseteq {\ieqclose{S}}$.
\end{proof}

It follows immediately that $\ieq$ is closed under $\ieqclose{\cdot}$.
\begin{corollary}
  We have ${\ieq} \;=\; {\ieqclose{\ieq}}$ for every TRS $\atrs$.
\end{corollary}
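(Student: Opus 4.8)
The plan is to recognize the relation $\ieq$ as a particular instance of the operator $\ieqclose{\cdot}$ from Lemma~\ref{lem:ieq:closed}, so that the corollary reduces to a direct application of the idempotence proved there. The whole argument hinges on spotting the right value of $S$.

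First I would take $S = {\rstep}$, the root-step relation. Since ${\rstep}^{-1} = {\leftarrow_\varepsilon}$, unfolding the definition of $\ieqclose{\cdot}$ gives
\[
  \ieqclose{\rstep} = \gfp{R}{({\rstep}^{-1} \cup {\rstep} \cup \down{R})^*} = \gfp{R}{(\leftarrow_\varepsilon \cup \rstep \cup \down{R})^*},
\]
which is exactly the greatest-fixed-point characterisation of $\ieq$ recorded after Definition~\ref{def:ieq:rules}. Hence ${\ieq} = {\ieqclose{\rstep}}$. Then I would invoke Lemma~\ref{lem:ieq:closed} with this same $S = {\rstep}$, obtaining ${\ieqclose{\rstep}} = {\ieqclose{\ieqclose{\rstep}}}$. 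Substituting the identity ${\ieq} = {\ieqclose{\rstep}}$ into both occurrences yields ${\ieq} = {\ieqclose{\ieq}}$, which is the claim.

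I do not anticipate any genuine obstacle: the entire content is the observation that $\ieq$ is $\ieqclose{\cdot}$ applied to the single root-step relation, after which the corollary is literally an instance of Lemma~\ref{lem:ieq:closed}. The only point worth a line of justification is the identity ${\rstep}^{-1} = {\leftarrow_\varepsilon}$, namely that reversing a root step is precisely a backward root step, which is immediate from the definitions of $\rstep$ and $\leftarrow_\varepsilon$.
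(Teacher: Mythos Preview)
Your proposal is correct and matches the paper's proof essentially line for line: the paper also instantiates $S = {\rstep}$, uses the identification ${\ieq} = \ieqclose{\rstep}$, and then applies Lemma~\ref{lem:ieq:closed} to conclude ${\ieq} = \ieqclose{\rstep} = \ieqclose{\ieqclose{\rstep}} = \ieqclose{\ieq}$. Your extra remark justifying ${\rstep}^{-1} = {\leftarrow_\varepsilon}$ is a harmless bit of added detail.
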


\begin{proof}
  We have 
  ${\ieq} \;=\; {\ieqclose{\to_{\varepsilon}}} 
          \;=\; {\ieqclose{\ieqclose{\to_{\varepsilon}}}} 
          \;=\; {\ieqclose{\ieq}}$.
\end{proof}

The work~\cite{kahr:2013} introduces various notions of infinitary rewriting.
We comment on the notions that are closest to the relations $\ibi$ and $\ieq$ introduced in our paper.
First, we note that it is not difficult to see that ${\ibi} \subsetneq {\mred_t}$ where
$\mred_t$ is the topological graph closure of $\to$.
The paper~\cite{kahr:2013} also introduces a notion of infinitary equational reasoning
with a strongly convergent flavour, namely:
\begin{align*}
  S_E(R) = (S \circ E)^*(R)
\end{align*}
where 
$E(R)$ is the equivalence closure of $R$, and
$S(R)$ is the strongly convergent rewrite relation obtained from (single steps) $R$. 
The following lemma relates this notion with our notion of infinitary equational reasoning $\ieq$.

\begin{lemma}
  We have ${S_E(\to)} \;\subseteq\; {\ieq}$ for every TRS $\atrs$. 
  Moreover, there exists TRSs $\atrs$ for which the inclusion is strict.
\end{lemma}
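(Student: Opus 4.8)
The plan is to prove the two parts separately: first the inclusion ${S_E(\to)} \subseteq {\ieq}$, and then exhibit a TRS witnessing strictness. For the inclusion, the natural strategy is to unfold the definition of $S_E$ and show that each of its building blocks is already subsumed by $\ieq$. Recall $S_E(R) = (S \circ E)^*(R)$, where $E$ is equivalence closure and $S$ is the strongly convergent rewrite relation. The key observation is that $\ieq$, as established in the corollary just proved, satisfies ${\ieq} = {\ieqclose{\ieq}}$, so $\ieq$ is already closed under equivalence closure, under strongly convergent rewriting below contexts, and under transitive iteration. Concretely, first I would show $S(\to) \subseteq {\ieq}$, i.e.\ that every strongly convergent rewrite sequence is an instance of $\ieq$; this follows because ${\ired} = {\iredord}$ by Theorem~\ref{thm:ired:equiv}, and ${\ired} \subseteq {\ieq}$ by the inclusion chain established earlier. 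Next, since $\ieq$ is symmetric and transitive (by the reflexive-transitive-symmetric structure of its defining rule), it contains $E(\to)$ and is closed under composition, so $(S \circ E)(\to) \subseteq {\ieq}$.

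\textbf{Iterating the closure.} The remaining point is the outermost Kleene star: I must show $(S \circ E)^*(\to) \subseteq {\ieq}$. Here the cleanest route is to argue that $\ieq = \ieqclose{\to_\varepsilon}$ is a fixed point under exactly the operations $S_E$ performs. Since ${\ieq} = {\ieqclose{\ieq}}$ (the Corollary), and $\ieqclose{\cdot}$ already absorbs inverse, composition, lifting into contexts, and transitive closure, each application of $S \circ E$ to a relation already contained in $\ieq$ stays within $\ieq$. Thus by induction on the number of iterations of $S \circ E$, every stage of $(S \circ E)^*(\to)$ remains inside $\ieq$, giving $S_E(\to) \subseteq {\ieq}$. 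The care needed is to match the definitions: one must check that the strongly convergent steps allowed by $S$ are genuinely the forward/backward strongly convergent reductions captured by $\ired$ and its reverse, which is precisely what Theorem~\ref{thm:ired:equiv} provides.

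\textbf{Strictness.} For the strict inclusion, I would reuse Example~\ref{ex:ieq}, which already witnesses strictness of ${({\ibii} \cup {\ibi})^*} \subsetneq {\ieq}$. The term pair $\fun{C}(\fun{a}) \ieq \fun{C}^\omega$ requires changing rewrite direction infinitely often, interleaving strongly convergent forward reductions with backward ones at ever greater depth. The point is that $S_E(\to)$, being built from finitely many alternations $(S \circ E)^*$, can only change direction a bounded number of times at the top level, and each single $S$-step, although transfinite, is a \emph{forward} strongly convergent reduction. I would verify that no finite composition of equivalence and strongly convergent steps produces $\fun{C}(\fun{a})$ from $\fun{C}^\omega$, essentially because the necessary direction reversals must occur unboundedly deep in the term, which the finite outer iteration of $S \circ E$ cannot supply.

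\textbf{Main obstacle.} The delicate step is the strictness argument, specifically showing that $\pair{\fun{C}(\fun{a})}{\fun{C}^\omega} \notin S_E(\to)$. The inclusion direction is essentially bookkeeping once the closure properties of $\ieq$ are in hand, but ruling out membership in $S_E(\to)$ demands a precise analysis of what strongly convergent steps combined with finitely many equivalence closures can achieve. I expect the core of the argument to be that $S_E(\to)$ permits only finitely many ``direction changes'' observable at any fixed depth, whereas the derivation of $\fun{C}(\fun{a}) \ieq \fun{C}^\omega$ provably requires infinitely many, as exhibited in the alternating sequence displayed in the proof of the preceding theorem. Making this lower bound rigorous—rather than merely intuitive—will require either a careful depth-counting invariant preserved by $S \circ E$, or an appeal to the structural properties of the topological closure $\mred_t$ noted just before the lemma.
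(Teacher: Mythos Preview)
Your approach to the inclusion ${S_E(\to)} \subseteq {\ieq}$ matches the paper's: both rely on ${\to}\subseteq{\ieq}$, ${E(\ieq)}={\ieq}$, and ${S(\ieq)}\subseteq{\ieqclose{\ieq}}={\ieq}$, then iterate. That part is fine.

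The strictness argument, however, has a genuine gap. Example~\ref{ex:ieq} does \emph{not} witness strictness of ${S_E(\to)}\subsetneq{\ieq}$; in fact $\pair{\fun{C}(\fun{a})}{\fun{C}^\omega}\in S_E(\to)$ after only two iterations of $S\circ E$. Here is why. With $R_1\defd S(E(\to))$ we have $\fun{a}\mathrel{R_1}\fun{f}^\omega$ and $\fun{b}\mathrel{R_1}\fun{f}^\omega$, hence $\fun{a}\mathrel{E(R_1)}\fun{b}$. Since $R_1$ is closed under contexts, so is $E(R_1)$, giving $\fun{C}(\fun{a})\mathrel{E(R_1)}\fun{C}(\fun{b})\to\fun{C}(\fun{C}(\fun{a}))$, i.e.\ the pair $(\fun{C}(\fun{a}),\fun{C}(\fun{C}(\fun{a})))$ lies in $E(R_1)$. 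Now treat this pair as a single step and apply it inside the context $\fun{C}^{n-1}[\cdot]$ at depth $n-1$: the sequence $\fun{C}(\fun{a})\to_{E(R_1),\varepsilon}\fun{C}^2(\fun{a})\to_{E(R_1),1}\fun{C}^3(\fun{a})\to_{E(R_1),11}\cdots$ is strongly convergent with limit $\fun{C}^\omega$. Thus $\fun{C}(\fun{a})\mathrel{(S\circ E)^2(\to)}\fun{C}^\omega$.

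Your intuition that ``$S_E$ allows only finitely many direction changes'' conflates $S_E(\to)$ with $({\iredi}\cup{\ired})^*$. The crucial difference is that $S$ is applied to the \emph{already equivalence-closed} relation, so a single application of $S$ can absorb infinitely many direction changes as long as they occur at increasing depth. To separate $S_E(\to)$ from $\ieq$ one needs an example where, for every $n$, some equation required at depth~$n$ genuinely needs $n$ iterations of $S\circ E$. The paper constructs such a system: an infinite family $a_i,b_i$ where $a_i\ieq b_i$ provably requires $i{+}1$ iterations (because deriving $a_{i+1}=b_{i+1}$ needs $a_i=b_i$ \emph{inside} a forward limit), and a target equation $g(a_0(\varepsilon))\ieq g^\omega$ that needs all of them simultaneously. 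No finite power of $S\circ E$ suffices, so the pair is outside $S_E(\to)$.
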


\begin{proof}
  The inclusion is immediate from 
  \begin{enumerate}
    \item ${\to} \subseteq {\ieq}$,
    \item ${E(\ieq)} = {\ieq}$, and
    \item ${S(\ieq)} \subseteq {\ieqclose{\ieq}} = {\ieq}$.
  \end{enumerate}
  The following example shows that the inclusion can be strict.
\end{proof}

\begin{example}
  Consider the TRS $\atrs$ consisting of the rules
  \begin{align*}
    g(b_i(\varepsilon)) &\to g(g(a_{i+1}(\varepsilon)))\\
    a_0(\varepsilon) &\to f_0(a_0(\varepsilon)) &
    b_0(\varepsilon) &\to f_0(b_0(\varepsilon)) \\
    a_{i+1}(\varepsilon) &\to a_{i+1}(a_i(\varepsilon)) &
    b_{i+1}(\varepsilon) &\to b_{i+1}(b_i(\varepsilon)) \\
    a_{i+1}(b_i(\varepsilon)) &\to f_{i+1}(a_{i+1}(\varepsilon)) &
    b_{i+1}(a_i(\varepsilon)) &\to f_{i+1}(b_{i+1}(\varepsilon)) 
  \end{align*}
  for every $i \in \nat$.
  We will argue that
  \begin{align}
    g(a_0(\varepsilon)) \ieq g^\omega \;, \label{eq:g0a0}
  \end{align}
  but this equality does not hold in $S_E(\to)$.
  
  We have $a_0(\varepsilon) \ieq b_0(\varepsilon)$ as a consequence of $a_0(\varepsilon) \ieq f_0^\omega \ieq b_0(\varepsilon)$. 
  Moreover, by induction on $i \in \nat$ we conclude that $a_{i+1}(\varepsilon) \ieq b_{i+1}(\varepsilon)$  
  since
  \begin{align*}
    a_{i+1}(\varepsilon) \ieq a_{i+1}(a_i(\varepsilon)) \ieq a_{i+1}(b_i(\varepsilon)) \ieq f_{i+1}(a_{i+1}(\varepsilon)) \ieq \cdots \ieq f_{i+1}^\omega 
  \end{align*}
  and likewise
  \begin{align*}
    b_{i+1}(\varepsilon) \ieq b_{i+1}(b_i(\varepsilon)) \ieq b_{i+1}(a_i(\varepsilon)) \ieq f_{i+1}(b_{i+1}(\varepsilon)) \ieq \cdots \ieq f_{i+1}^\omega \;.
  \end{align*}
  Then we obtain equation~\eqref{eq:g0a0} since 
  \begin{align*}
    g(a_0(\varepsilon)) \ieq g(b_0(\varepsilon)) \ieq g(g(a_1(\varepsilon))) \ieq g(g(b_1(\varepsilon))) \ieq g(g(g(a_2(\varepsilon)))) \ieq \cdots \ieq g^\omega
  \end{align*}
  We give a rough sketch of the proof that equation~\eqref{eq:g0a0} is not valid in $S_E(\to) = (S \circ E)^*(\to)$.
  We use $R_n$ to denote the relation $(S \circ E)^n(\to)$. 
  Note that 
  \begin{enumerate}[label=(\alph*)]
    \item 
      The relation $R_1$ contains only forward limits such as 
      $a_0(\varepsilon) \mathrel{R_1} f_0^\omega$ and
      $b_0(\varepsilon) \mathrel{R_1} f_0^\omega$,
      but no backwards limits like
      $f_0^\omega \mathrel{R_1} b_0(\varepsilon)$ and hence not $a_0(\varepsilon) \mathrel{R_1} b_0(\varepsilon)$.
    \item 
      The relation $R_2$ contains forward limits using steps $E(R_1)$ 
      and we have $a_0(\varepsilon) \mathrel{E(R_1)} b_0(\varepsilon)$.
      Thus we can derive
      $a_1(\varepsilon) \mathrel{R_2} f_1^\omega$ and
      $b_1(\varepsilon) \mathrel{R_2} f_1^\omega$.
      However, we do not have 
      $f_1^\omega \mathrel{R_2} b_1(\varepsilon)$.
    \item 
      The relation $R_3$ contains forward limits using steps $E(R_2)$ 
      and we have $a_1(\varepsilon) \mathrel{E(R_2)} b_1(\varepsilon)$.
      We derive 
      $a_2(\varepsilon) \mathrel{R_3} f_2^\omega$ and
      $b_2(\varepsilon) \mathrel{R_3} f_2^\omega$, but not
      $f_2^\omega \mathrel{R_3} b_2(\varepsilon)$.
    \item \ldots
  \end{enumerate}
  
  In general, we obtain for every $i,j \in \nat$ that
  \begin{enumerate}
    \item $a_j(\varepsilon) \mathrel{R_i} f_j^\omega$ if and only if $j < i$, 
    \item $b_j(\varepsilon) \mathrel{R_i} f_j^\omega$ if and only if $j < i$, 
    \item $a_j(\varepsilon) \mathrel{E(R_i)} b_j(\varepsilon)$ if and only if $j < i$.
  \end{enumerate} 

  For deriving equation~\eqref{eq:g0a0} we need $a_i(\varepsilon) = b_i(\varepsilon)$ for every $i \in \nat$,
  and consequently there exists no $i \in \nat$ such that $g(a_0(\varepsilon)) \mathrel{R_i} g^\omega$.
  Hence equation~\eqref{eq:g0a0} does not hold in $S_E(\to)$.

  We note that the rewrite system contains an infinite number of rules and symbols.
  However, this is not crucial for illustrating the difference between $\ieq$ and $S_E(\to)$. 
  We can obtain a system with finitely many rules and symbols
  by modelling $a_i(x)$, $b_i(x)$ and $f_i(x)$ by $a(s^i(0),x)$, $b(s^i(0),x)$ and $f(s^i(0),x)$.
  The corresponding finite system is:
  \begin{align*}
    g(b(x,\varepsilon)) &\to g(g(a(s(x),\varepsilon)))\\
    a(0,\varepsilon) &\to f(0,a(0,\varepsilon)) &
    b(0,\varepsilon) &\to f(0,b(0,\varepsilon)) \\
    a(s(x),\varepsilon) &\to a(s(x),a(x,\varepsilon)) &
    b(s(x),\varepsilon) &\to b(s(x),b(x,\varepsilon)) \\
    a(s(x),b(x,\varepsilon)) &\to f(s(x),a(s(x),\varepsilon)) &
    b(s(x),a(x,\varepsilon)) &\to f(s(x),b(s(x),\varepsilon)) 
  \end{align*}
  Then $g(a(0,\varepsilon)) \ieq g^\omega$ but this equation does not hold in $S_E(\to)$.
\end{example}

\section{A Formalization in Coq}\label{sec:coq}

The standard definition of infinitary rewriting,
using ordinal length rewrite sequences and strong convergence at limit ordinals,
is difficult to formalize. 
The coinductive framework we propose, is easy to formalize and work with in theorem provers.

In Coq, the coinductive definition of 
infinitary strongly convergent reductions can be defined as follows:
{\small
\begin{verbatim}
Inductive ired : relation term :=
  | Ired : 
      forall R I : relation term,
      subrel I ired ->
      subrel R ((root_step (+) lift I)* ;; lift R) ->
      subrel R ired.
\end{verbatim}}
\noindent
Here 
\verb=term= is the set of coinductively defined terms,
\verb=;;= is relation composition,
\verb=(+)= is the union of relations, 
\verb=*= the reflexive-transitive closure,
\verb=lift R= is~$\down{R}$,
and \verb=root_step= is the root step relation.

Let us briefly comment on this formalization.
Recall that  ${\ired} \;\defd\; \lfp{R}{\gfp{S}{G(R,S)}}$
where $G(R,S) = (\rstep \cup \mathrel{\down{R}})^*\relcomp \down{S}$.
The inductive definition of \verb=ired= corresponds to the least fixed point $\slfp{R}$.
Coq has no support for mutual inductive and coinductive definitions.
Therefore, instead of the explicit coinduction, 
we use the $\nu$-rule from~\eqref{eq:coind-ind-rules}.
For every relation $T$
that fulfills $T \subseteq  G(R,T)$, we have that $T \subseteq \gfp{S}{G(R,S)}$.
Moreover, we know that $\gfp{S}{G(R,S)}$ is the union of all these relations $T$.
Finally, we introduce an auxiliary relation \verb=I= 
to help Coq generate a good induction principle.
One can think of \verb=I= as consisting of those pairs
for which the recursive call to \verb=ired= is invoked.
Replacing \verb=lift I= by \verb=lift ired= is correct, 
but then the induction principle that Coq generates for \verb=ired= is useless.

On the basis of the above definition we proved the Compression Lemma:
whenever there is an infinite reduction from $s$ to $t$ ($s \ired t$) 
then there exists a reduction of length at most $\omega$ from $s$ to~$t$ ($s \to^{\le \omega} t$).
The Compression Lemma holds for left-linear TRSs with finite left-hand sides.
To characterize rewrite sequences $\to^{\le \omega}$ in Coq, we define:
{\small
\begin{verbatim}
  Inductive ored : relation (term F X) :=
  | Ored : 
      forall R : relation (term F X),
      subrel R (mred ;; lift R) ->
      forall s t, R s t -> ored s t.
\end{verbatim}}
\noindent
Here \verb=mred= are finite rewrite sequences $\to^*$.
The definition can be understood as follows.
We want the relation~\verb=ored= to be the greatest fixed point of $H$ defined by
$H(R) =  {\to^*} \circ {\down{R}}$.
So we allow a finite rewrite sequence after which the rewrite activity has to go `down' to the arguments.
Again, as above for \verb=ired=, we avoid the use of coinduction
and define \verb=ored= inductively as the union of all relations $R$ with $R \subseteq H(R)$.

To the best of our knowledge this is the first formal proof of this well-known lemma.
The formalization is available at 
\url{http://dimitrihendriks.com/coq/compression}.

\section{Conclusion}\label{sec:conclusion}
We have proposed a coinductive framework which gives rise to several
natural variants of infinitary rewriting in a uniform way:
  \begin{enumerate}[label=({\alph*})]
    \item infinitary equational reasoning
        ${\ieq} \;\;\defd\;\; \gfp{y}{(\leftarrow_{\varepsilon} \cup \rstep \cup \mathrel{\down{y}})^*}$,
  \smallskip
    \item bi-infinite rewriting
        ${\ibi} \;\;\defd\;\; \gfp{y}{(\rstep \cup \mathrel{\down{y}})^*}$, and
  \smallskip
    \item infinitary rewriting
        ${\ired} \;\;\defd\;\; \lfp{x}{\gfp{y}{(\rstep \cup \mathrel{\down{x}})^*\relcomp \down{y}}}$\,.
  \end{enumerate}
We believe that (a) and (b) are new.
As a consequence of the coinduction over the term structure,
these notions have the strong convergence built-in,
and thus can profit from the well-developed techniques (such as tracing)
in infinitary rewriting.

We have given a mixed inductive/coinductive definition of infinitary rewriting 
and established a bridge between infinitary rewriting and coalgebra.
Both fields are concerned with infinite objects and we would like %
to understand their relation better. 
In contrast to previous coinductive treatments, 
the framework presented here captures rewrite sequences of arbitrary ordinal length,
and paves the way for formalizing infinitary rewriting in theorem provers
(as illustrated by our proof of the Compression Lemma in Coq).

Concerning proof trees/terms for infinite reductions, let us mention that 
an alternative approach has been developed in parallel by Lombardi, R\'{\i}os and de~Vrijer~\cite{lomb:rios:vrij:2014}.
While we focus on proof terms for the reduction relation and abstract from the order of steps in parallel subterms, 
they use proof terms for modeling the fine-structure of 
the infinite reductions themselves.
Another difference is that our framework allows for non-left-linear systems.
We believe that both approaches are complementary.
Theorems for which the fine-structure of rewrite sequences is crucial,
must be handled using~\cite{lomb:rios:vrij:2014}.
(But note that we can capture standard reductions by a restriction on proof trees 
and prove standardization using proof tree transformations, see~\cite{endr:polo:2012b}). 
If the fine-structure is not important, as for instance for proving confluence,
then our system is more convenient to work with due to simpler proof terms.

Our work lays the foundation for several directions of future research:
\begin{enumerate}
  \item 
    The coinductive treatment of infinitary $\lambda$-calculus~\cite{endr:polo:2012b}
    has led to elegant, significantly simpler proofs~\cite{czaj:2014,czaj:2015}
    of some central properties of the infinitary $\lambda$-calculus.
    The coinductive framework that we propose
    enables similar developments for infinitary term rewriting
    with reductions of arbitrary ordinal length.

  \item 
    The concepts of bi-infinite rewriting and infinitary equational reasoning are novel.
    We would like to study these concepts,
    in particular since the theory of infinitary equational reasoning is still underdeveloped.
    For example, it would be interesting to compare the 
    Church--Rosser properties 
    \begin{align*}
      {\ieq}  \;\subseteq\; {\ired \relcomp \iredi} 
      &&\text{ and }&& 
      {(\iredi \relcomp \ired)^*} \;\subseteq\; {\ired \relcomp \iredi} \;\,.
    \end{align*}
    \vspace{-4ex}
    
  \item 
    The formalization of the proof of the Compression Lemma in Coq
    is just the first step towards the formalization of all major theorems in infinitary rewriting.
    
  \item It is interesting to investigate whether and how the coinductive framework
    can be extended to other notions of infinitary rewriting,
    for example reductions %
    where root-active terms are mapped to $\bot$ in the limit~\cite{bahr:2010,bahr:2010b,bahr:2012,endr:hend:klop:2012}.
    
  \item 
    We believe that the coinductive definitions will ease the development of new techniques for automated reasoning 
    about infinitary rewriting.
    For example, methods
    for proving (local) productivity~\cite{endr:grab:hend:2009,endr:hend:2011,zant:raff:2010},
    for (local) infinitary normalization~\cite{zant:2008,endr:grab:hend:klop:vrij:2009,endr:vrij:wald:10},
    for (local) unique normal forms~\cite{endr:hend:grab:klop:oost:2014},
    and for analysis of infinitary reachability and infinitary confluence.
    Due to the coinductive definitions, the implementation and formalization of these techniques 
    could make use of circular coinduction~\cite{gogu:lin:rosu:2000,endr:hend:bodi:2013}.  
\end{enumerate}

\subsection*{Acknowledgments}
We thank Patrick Bahr, Jeroen Ketema, and Vincent van Oostrom for fruitful discussions 
and comments on earlier versions of this paper.

\bibliography{main}

\end{document}